\newtheorem{definition}{Definition}[section]
\newtheorem{lemma}[definition]{Lemma}
\newtheorem{theorem}[definition]{Theorem}
\newtheorem{remark}[definition]{Remark}
\newtheorem{example}[definition]{Example}
\newtheorem{fact}[definition]{Fact}
\newenvironment{proof}{{\em Proof.}}{
}
\def \bfr {\ \begin{color}{blue}} 
\def \efr {\end{color}\ }
\def \bmr {\begin{color}{red}} 
\def \emr {\end{color}}
\newcommand{\msg}[1]{\textsc{\footnotesize #1}}
\newcommand{\pair}[2]{(#2,#1)}
\newcommand{\set}[1]{\{#1\}}
\newcommand{\rn}[1]{\textsc{\footnotesize [{#1}]}}
\newcommand{\coDef}{::=^{coind}}
\newcommand{\qed}{\hfill $\Box$}
 \newcommand{\Nset}
                          {\mathcal{N}}
\newcommand{\G}{{\sf G}}
\newcommand{\Gc}[6]{\ensuremath{#1\to#2:\set{#3_#5. #4_#5}_{#5\in#6} }}
\newcommand{\End}{\ensuremath{\mathtt{End}}}
\newcommand{\inact}{\ensuremath{\mathbf{0}}}
\newcommand{\PP}{P}
\newcommand{\Q}{Q}
\newcommand{\R}{R}
\newcommand{\U}{U}
\newcommand{\Pin}[5]{\ensuremath{#1?\set{#2_#4. #3_#4}_{#4\in#5} }}
\newcommand{\Pou}[5]{\ensuremath{#1!\set{#2_#4. #3_#4}_{#4\in#5} }}
\newcommand{\Pb}[5]{\ensuremath{#1\dagger\set{#2_#4. #3_#4}_{#4\in#5} }}
\newcommand{\pa}{{\sf a}}
\newcommand{\pb}{{\sf b}}
\newcommand{\pc}{{\sf c}}
\newcommand{\pp}{{\sf p}}
\newcommand{\q}{{\sf q}}
\newcommand{\pr}{{\sf r}}
\newcommand{\ps}{{\sf s}}
\newcommand{\pu}{{\sf u}}
\newcommand{\PB}{B}
\newcommand{\PS}{S}
\newcommand{\PU}{U}
\newcommand{\PQ}{Q}
\newcommand{\PR}{R}
\newcommand{\participants}{\mathcal{P}}
\newcommand{\pP}[2]{\ensuremath{#1[#2]}}
 \newcommand{\parN}{\mathrel{\|}}
 \newcommand{\tyng}[3]{#1\vdash_{#3} #2}
 \newcommand{\NamedCoRule}[5][]{\IInfer[#1]{#2}{ #3 }{#4}{#5}} 
\newcommand {\IInfer} [5] [] {
  \inferrule*[%
    fraction={===}, 
    left={\textsc{#2}},%
    right={$\begin{array}{l} #5 \end{array}$}, 
    #1
  ]%
   {#3}{#4}}
  \newcommand{\NamedRule}[5][]{\Infer[#1]{#2}{ #3 }{#4}{#5}} 
\newcommand {\Infer} [5] [] {
  \inferrule*[%
    left={\textsc{#2}},%
    right={$\begin{array}{l} #5 \end{array}$}, 
    #1
  ]%
   {#3}{#4}}
   \newcommand{\Nt}{{\mathbb{M}}}
   \newcommand{\X}{X}
   \newcommand{\plays}[1]{\ensuremath{{\sf prt}(#1)}}
   \newcommand{\Y}{Y}
   \newcommand{\Set}[1]{\{ #1 \}}
    \newcommand{\DD}{\mathcal D}
     \newcommand{\depth}[2]{ \textit{depth} (#1, #2)}
     \newcommand{\mypath}{\sigma}
  \newcommand{\Nat}{\mathbb{N}}
  \newcommand{\paths}[1]{\textit{paths} (#1)}
  \def\finex{{\unskip\nobreak\hfil
\penalty50\hskip1em\null\nobreak\hfil{\Large $\diamond$}
\parfillskip=0pt\finalhyphendemerits=0\endgraf}}
\newcommand{\stackred}[1]{\xrightarrow{#1}}
\newcommand{\concat}[2]{\ensuremath{#1\,{\cdot}\,#2}}
\newcommand{\cardin}[1]{ | #1 |}
   \newcommand{\ee}{\epsilon}
\renewcommand{\Nset}{\mathcal{N}}
\newcommand{\itr}[1]{$\iota$-#1}
\newcommand{\tynI}[3]{#1\vdash_{#3}^{\iota} #2} 
\newcommand{\tynIP}[3]{#1\vdash_{#3}^{\iota} #2}
\newcommand{\itrp}[1]{$\iota$-#1}
\newcommand{\tp}[2]{{\sf tp}(#1,#2)}
\newcommand{\Gpat}{\mathbb{G}} 
\newcommand{\Ppat}{\mathbb{P}}
\newcommand{\vars}[1]{\mathsf{vars}(#1)}
\newcommand{\sbtplus}{+} 
\newcommand{\dom}[1]{\mathsf{dom}(#1)}
\newcommand{\sbtord}{\preceq}
\newcommand{\agteq}[2]{#1 \eqcirc #2}
\newcommand{\eqsys}{\mathcal{E}} 
\newcommand{\eqsysP}{E} 
\newcommand{\gsol}[2][]{\mathsf{sol}_{#1}(#2)}
\newcommand{\tyalg}[4]{#1 \vdash #2 \Rightarrow (#3,#4)} 
\newcommand{\Goals}{\mathcal{S}}
\newcommand{\infn}[1]{A-#1}
\newcommand{\la}{\M}
\newcommand{\M}{\lambda}
\newcommand{\ipair}[3]{(#1,#2,#3)}
\newcommand{\sw}{Weak} 
\newcommand{\x}{x}  
\newcommand{\y}{y} 
\newcommand{\pcon}[5]{(#1\cup#2)\setminus\set{#3,#4}\doteq#5}
\newcommand{\pcone}[5]{(#1\cup#2)\setminus\set{#3,#4}=#5}
\newcommand{\Pcon}{\mathcal C}
\newcommand{\triple}[3]{(#1,#2,#3)}
\newcommand{\Lc}{Lock}
\newcommand{\Cline}[1] {\vspace{1.4mm}\centerline{$ #1 $}\vspace{1.5mm}}
\begin{document}

\title{Partially Typed Multiparty Sessions}
\author{Franco Barbanera\thanks{
Partially supported by 
Project "National Center for "HPC, Big Data e Quantum Computing",  Programma M4C2 – dalla ricerca all’impresa – Investimento 1.3: Creazione di “Partenariati estesi alle università, ai centri di ricerca, alle aziende per il finanziamento di progetti di ricerca di base” – Next Generation EU; and by the 
Piano Triennale Ricerca Pia.Ce.Ri UniCT.
}
\institute{
Dipartimento di Matematica e Informatica,
Universit\`a di  Catania, Catania, Italy}
\email{barba@dmi.unict.it}
\and
Mariangiola Dezani-Ciancaglini
\institute{
Dipartimento di Informatica,
Universit\`a di Torino, Torino, Italy}
\email{dezani@di.unito.it}
}

\def\titlerunning{Partially Typed Multiparty Sessions} 

\def\authorrunning{ Barbanera \& Dezani-Ciancaglini
}

\maketitle

\begin{abstract}
 A multiparty session 
 formalises a set 
of concurrent communicating participants.
We propose a type system for 
multiparty sessions where some communications between participants can be  
ignored. This allows us to type some sessions with global types
representing interesting protocols, which have no type in the standard type systems. 
Our type system enjoys Subject Reduction, Session Fidelity and  ``partial'' Lock-freedom.
The last property  ensures  
the absence of 
\emph{locks} 
for participants with non ignored 
communications. A sound and complete type inference algorithm is also discussed. 
\end{abstract}

\section{Introduction}\label{int}

The key issue in multiparty distributed systems is the composition of independent entities such that
a sensible behaviour of the whole emerges from  those  
of the components, while avoiding type errors 
of exchanged messages and ensuring good  communication  properties like  Lock-freedom. 
MultiParty Session Types (MPST), introduced
in~\cite{HYC08,Honda2016}, are a class of choreographic formalisms for the description and analysis of such systems. 
Choreographic formalism are characterised by the coexistence of two
distinct but related views of distributed systems: the \emph{global}
and the \emph{local} views. 
The former describes the behaviour of a system as a whole, whereas the local views specify the behaviour of the single components in ``isolation''.
Systems described 
by means of MPST formalisms are usually  ensured 
(i) their overall behaviour to adhere to a given communication protocol (represented as a global type) and
(ii) to enjoy particular communication properties like  Lock-freedom (the 
specific property we focus on in the present paper). 

In~\cite{BDL22}  a  MPST formalism was developed for systems using synchronous communications,
where global types can be assigned to multiparty sessions (parallel composition of named processes)
via a type system. Typability of a multiparty session $\Nt$ by a global type $\G$  ensures  
that $\Nt$ behaves as described by $\G$ and is lock-free.

The property of  Lock-freedom  ensures that no lock is ever reached in the evolution of a system.
A lock is a system's reachable configuration where a  participant,  
which is able to perform
an action, is forever prevented to do so in any possible continuation. In particular,
such a configuration is called a $\pp$-lock  in case the stuck participant be $\pp$.
Lock-freedom -- which entails  Deadlock-freedom -- 
could be  however too strong to be proved in some  settings,  
and actually useless  sometimes.
As a matter of fact,  for  
particular systems,
the presence of  $\pp$-locks for some participants would not be problematic and 
would not break their  specifications. 
Let us assume, for instance, to have a social medium where participants can ask for
upgrades of their communication level (i.e. the capability describing  which participants they can communicate with  and 
which sort of  messages 
 can be sent). The upgrades are granted by some particular participant
$\pu$ according to the particular policy of the social medium. 
In case $\pu$ be implemented so to reply to an unbounded number of requests,
it is immediate to realise that, in case all the participants get to the highest communication
level, we would be in presence of a $\pu$-lock since no more level upgrade will be
requested. This would not be a problem, since what we are interested in is the possibility
for all the participants to progress until no communication with $\pu$ is possible.
From that moment on the participants other than $\pu$ must be  ensured  
to progress, but not  $\pu$.
This sort of circumstance is typical in clients/servers scenarios.
Given a set of participants $\participants$, we dub a system to be 
{\em $\participants$-excluded lock-free} whenever  it is $\pp$-lock free for each participant $\pp$ not belonging to $\participants$.

In  this paper we present a MPST  type  system in the style of~\cite{BDL22}
where it is possible to derive judgments of the  new  shape\\  
\Cline{
\tyng{\G}{\Nt}{\participants}
}

 We say that our typing is {\em partial} since some communications between participants in $\participants$ do not appear in the global type. Our type system 
  ensures that $(a)$ the communications of the participants  in 
 $\Nt$ not belonging to  $\participants$ comply with the interaction scenario represented 
 by $\G$ and $(b)$ $\Nt$ is $\participants$-excluded lock-free.

{\em Contributions and structure of the paper.} In Section~\ref{cal} we 
 recall  
the calculus of  multiparty sessions  from ~\cite{BDL22},   together with  
the global types. 
  Also, we introduce the novel notion of $\participants$-excluded \Lc-freedom, that we 
clarify by means of an example.     Section~\ref{ts}  is 
devoted to the presentation of our ``partial'' type system, assigning global types to multiparty sessions  where some communications can be ignored.  
 Besides, we prove   
the relevant properties of  partially  typable sessions:  Subject Reduction, Session Fidelity and
 $\participants$-excluded   \Lc-freedom. 
In Section~\ref{ti} we discuss a sound and complete type inference algorithm for our partial type system.
A section summing up our results, 
discussing related works and possible directions for future work concludes  the paper.

\section{Multiparty Sessions and Global Types}\label{cal}

 In this section we recall the calculus of multiparty sessions  and the global types  defined in~\cite{BDL22}. This calculus is simpler 
than the original  MPST calculus~\cite{HYC08}  and many of the subsequent ones.
Lack of  explicit  channels -- even if preventing the representation of session interleaving and delegation -- 
enables us to focus on our main concerns and allows for 
a clear explanation of the type system we will introduce in the next section.

We use the following base sets and notation: \emph{messages}, ranged
over by $\lambda,\lambda',\dots$; \emph{session participants}, ranged over
by $\pp,\q,\pr, \ps, \pu,\ldots$; \emph{processes}, ranged over by $\PP,\Q,\PR,\PS,\U,\dots$;
\emph{multiparty sessions}, ranged over by $\Nt,\Nt',\dots$;
\emph{integers}, ranged over by $i, j,l,h,k,
\dots$;   {\em  integer  
sets}, ranged over by $I,J,L, H, K,\dots$. 

 \begin{definition}[Processes]\label{p} 
  {\em Processes} are defined by:
  
  \Cline{\PP\coDef\inact\ \mid\ \Pou\pp\lambda\PP i I\ \mid\ \Pin\pp\lambda\PP i I}
  
\noindent
where $I\neq\emptyset$ and  $\lambda_j\neq\lambda_h$ for  $j, h\in I$ and $j\neq h$.  
\end{definition}
The symbol $\coDef$ 
in Definition~\ref{p} and in later definitions indicates that the
productions are interpreted  \emph{coinductively}.
 That is,  processes are possibly infinite terms. 
However, we assume such processes to be \emph{regular},  i.e.,  with 
finitely many distinct sub-processes.  
  This is done also in~\cite{CastagnaGP09} and it allows us to adopt in proofs the coinduction style
advocated in~\cite{KozenS17} which, without any loss of formal rigour,
 promotes readability and conciseness. 

Processes implement the  communication behaviour of participants.
The output process
$ \Pou\pp\lambda\PP i I$  non-deterministically 
chooses one message $\lambda_i$ for some $i \in I$, and
sends it to the participant $\pp$, thereafter continuing as $\PP_i$. 
Symmetrically, the input process $ \Pin\pp\lambda\PP i I$ waits for one of
the messages $\lambda_i$ from the participant $\pp$, then continues as $\PP_i$
after receiving it.  When there is only one output we write $\pp!\lambda.\PP$ and similarly for one input.  We use $\inact$ to denote the terminated process.  We shall omit writing trailing $\inact$s in processes. We denote by $ \Pb\pp\lambda\PP i I$ either $ \Pou\pp\lambda\PP i I$ or $ \Pin\pp\lambda\PP i I$. 

In a full-fledged calculus, messages would carry values,  that 
we avoid for the sake of simplicity; hence
no selection operation over values is included in the  syntax.  

\begin{definition}[Multiparty sessions] 
{\em  Multiparty sessions}  are expressions of the
shape: 

\Cline{
\pP{\pp_1}{\PP_1} \parN \cdots \parN \pP{\pp_n}{\PP_n}} 

\noindent
where  $\pp_j \neq \pp_h $ for $1\leq j,h\leq n$ and $j\neq h$. 
 We use $\Nt$ to range over
multiparty sessions. 
\end{definition}

Multiparty sessions  (sessions, for short)  are parallel compositions of located processes of the form
$\pP{\pp}{\PP}$, each enclosed within a different
participant $\pp$. We assume the standard structural congruence  $\equiv$  on
multiparty sessions, stating that
parallel composition is associative and commutative and has neutral
elements $\pP\pp\inact$ for any $\pp$.
If $\PP\neq\inact$ we write $\pP{\pp}{\PP}\in\Nt$ as short for
$\Nt\equiv\pP{\pp}{\PP}\parN\Nt'$ for some $\Nt'$. This abbreviation
is justified by the associativity and commutativity of parallel composition. 

 The {\em set of  active  participants} (participants for short) of a session $\Nt$, notation $\plays\Nt$, is   as expected:

\Cline{\plays\Nt=\set{\pp\mid \pP{\pp}{\PP}\in\Nt}}

 \noindent It is easy to verify that the sets of participants of structurally  congruent  
  sessions coincide.   

To define the {\em  synchronous operational semantics} of sessions we use  an  LTS,
whose transitions are decorated by  labels denoting message exchanges. 

\begin{definition}[LTS for Multiparty Sessions]\label{slts}
The {\em   labelled transition   system   (LTS) 
for multiparty sessions}  
  is the closure under structural congruence of the reduction specified by the unique rule:
  
  \Cline{\begin{array}[c]{@{}c@{}}
      \NamedRule{\rn{Comm-T}}{  h\in I \subseteq J 
      }{
      \pP\pp  {\Pou\q\lambda\PP i I}
      \parN
      \pP \q {\Pin\pp\lambda\Q j J }\parN\Nt
      \stackred{\pp \lambda_h \q}
      \pP\pp{\PP_h}\parN\pP\q{\Q_h}\parN\Nt
      }{}
    \end{array}
 }
 
\end{definition}
\noindent
Rule \rn{Comm-T} makes 
 communications possible, by describing 
when a participant $\pp$ can send a message $\lambda_h$ to participant $\q$,
and what is the effect of such message exchange.
%
This rule is non-deterministic in the choice of messages.
The condition $I \subseteq J$
ensures that the sender can freely choose the message, since 
the receiver must offer all sender messages and possibly more.
This allows us to distinguish in the operational semantics between
internal and external choices.
  Note that this condition will 
  be ensured by the typing Rule \rn{Comm} (see Definition~\ref{def:type-system}). 
 
 Let $\Lambda$  range over   \emph{labels}, namely triples   
 of the form 
 $\pp\lambda\q$. We define {\em traces} as (possibly infinite) sequences of  labels 
 by:\\
\Cline{\mypath\coDef\ee\mid\concat\Lambda\mypath}

\noindent
 where $\ee$ is the empty sequence. 
We use $\cardin{\mypath}$ to denote the length of the trace $\mypath$, where $\cardin{\mypath} = \infty$ when $\mypath$ is an infinite trace.
 We define the participants of  labels  and traces:
 
 \Cline{\plays{\pp\lambda\q}=\set{\pp,\q}\qquad\plays\ee=\emptyset\qquad\plays{\concat\Lambda\mypath}=\plays\Lambda\cup\plays\mypath}
 
\noindent
 When $\mypath=\concat{\Lambda_1}{\concat\ldots{\Lambda_n}}$ ($n\geq 0)$
we write $\Nt\stackred{\mypath}\Nt'$ as short for
%
$\Nt\stackred{\Lambda_1}\Nt_1\cdots\stackred{\Lambda_n}\Nt_{n}  =  \Nt'$. 
 As usual we write $\Nt\rightarrow$  (resp. $\Nt\not\rightarrow$) when there exist (resp. no)  $\Lambda$ and $\Nt'$ such that 
$\Nt \stackred{\Lambda}\Nt'$.
 
  It is easy to verify that,  in a transition, only the two participants of its
  label are  involved,
   as formalised  below. 
 
 \begin{fact}\label{fact:arb}
If $\set{\pp,\q}\cap\set{\pr,\ps}=\emptyset$ and $\pP\pr\PR\parN\pP\ps\PS\parN\Nt  \stackred{\pp \lambda \q} \pP\pr\PR\parN\pP\ps\PS\parN\Nt'$, then 

\Cline{\pP\pr{\PR'}\parN\pP\ps{\PS'}\parN\Nt  \stackred{\pp \lambda \q} \pP\pr{\PR'}\parN\pP\ps{\PS'}\parN\Nt'}

\noindent for arbitrary $\PR',\PS'$.
\end{fact}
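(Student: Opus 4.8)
The plan is to invert the given transition through the definition of the LTS (Definition~\ref{slts}), which is the closure under structural congruence of the single Rule \rn{Comm-T}. A transition labelled $\pp\lambda\q$ can only be produced, up to $\equiv$, by an instance of \rn{Comm-T} whose sender is located at $\pp$ and whose receiver is located at $\q$. Hence I would first extract a message index $h$, index sets $I\subseteq J$ with $h\in I$, and a residual session, such that the source of the transition is congruent to $\pP\pp{\Pou\q\lambda\PP i I}\parN\pP\q{\Pin\pp\lambda\Q j J}\parN\Nt''$ and the target to $\pP\pp{\PP_h}\parN\pP\q{\Q_h}\parN\Nt''$, with $\lambda=\lambda_h$.

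The key observation I would then make is that, since participant names are pairwise distinct within a session and $\set{\pp,\q}\cap\set{\pr,\ps}=\emptyset$, the located processes of the sender $\pp$ and of the receiver $\q$ cannot coincide with the components at $\pr$ or $\ps$; therefore they occur inside $\Nt$. Concretely, $\Nt\equiv\pP\pp{\Pou\q\lambda\PP i I}\parN\pP\q{\Pin\pp\lambda\Q j J}\parN\Nt_0$ and $\Nt'\equiv\pP\pp{\PP_h}\parN\pP\q{\Q_h}\parN\Nt_0$ for some $\Nt_0$, while the $\pr$ and $\ps$ components stay untouched — which is exactly why they appear unchanged on both sides of the assumed transition.

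With this decomposition in hand, for arbitrary $\PR',\PS'$ I would simply reapply \rn{Comm-T} to the session $\pP\pr{\PR'}\parN\pP\ps{\PS'}\parN\Nt$. Grouping by $\equiv$, it is congruent to $\pP\pp{\Pou\q\lambda\PP i I}\parN\pP\q{\Pin\pp\lambda\Q j J}\parN(\pP\pr{\PR'}\parN\pP\ps{\PS'}\parN\Nt_0)$, so the very same instance of the rule fires with the same label $\pp\lambda_h\q$ and yields $\pP\pp{\PP_h}\parN\pP\q{\Q_h}\parN(\pP\pr{\PR'}\parN\pP\ps{\PS'}\parN\Nt_0)$, which is congruent to $\pP\pr{\PR'}\parN\pP\ps{\PS'}\parN\Nt'$. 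This is the required transition.

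The only delicate point is the bookkeeping with structural congruence in the inversion step: one must be certain that the components at $\pr$ and $\ps$ genuinely play no part in the redex. This is secured by the uniqueness of participant names in a session (together with the fact that $\equiv$ preserves $\plays\Nt$) and by the disjointness hypothesis, which jointly force $\pr$ and $\ps$ to be distinct from both endpoints of the communication and hence to reside in the spectator part $\Nt_0$. Since \rn{Comm-T} rewrites only the two named endpoints and copies the remaining parallel context verbatim, replacing $\PR,\PS$ by any $\PR',\PS'$ commutes with the rule application, giving the claim.
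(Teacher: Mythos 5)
Your proposal is correct. The paper gives no explicit proof of this Fact --- it is introduced with ``It is easy to verify that\dots'' --- and your argument (inverting the transition through the closure of Rule \rn{Comm-T} under structural congruence, using the disjointness hypothesis together with the uniqueness of participant names to locate the redex at $\pp$ and $\q$ inside $\Nt$, and then re-firing the same rule instance with $\PR',\PS'$ in the spectator context) is precisely the routine verification the authors leave implicit.
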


We  define now  the property of  $\participants$-excluded  \Lc-freedom,  
 a ``partial'' version of  the standard \Lc-freedom~\cite{Kobayashi02,Padovani14}. 
The latter consists in the  possible  eventual completion of pending communications of any participant
(this can be alternatively stated by saying that any participant is lock-free).
We are interested instead in the  progress of 
some specific participants only, 
 namely those we decide not to ``ignore''. 
  In the following, $\participants$ will range over sets of  ignored participants.

\begin{definition}[$\participants$-excluded \Lc-freedom]\label{d:lf}
   A  
  multiparty session $\Nt$ is a {\em $\participants$-excluded  lock-free} session if
	 $\Nt\stackred{\mypath}\Nt'$ and  $\pp\in\plays{\Nt'}\setminus\participants$  imply $\Nt'\stackred{\concat{\mypath'}\Lambda}\Nt''$
	 for some $\mypath'$ and $\Lambda$ such that $\pp\in\plays\Lambda$.
\end{definition}

 Clearly the standard notion of Lock-freedom coincides with $\emptyset$-excluded Lock-freedom. 

\smallskip

It is natural to extend also the usual notion of  Deadlock-freedom  to our setting.
\begin{definition}[$\participants$-excluded Deadlock-freedom]
  A  
  multiparty session $\Nt$ is a {\em $\participants$-excluded  deadlock-free} session if
	 $\Nt\stackred{\mypath}\Nt'\not\rightarrow$  implies  $\plays{\Nt'}\subseteq\participants$. 
\end{definition}

It is immediate to check that, as for standard  \Lc- and Deadlock-freedom,  the following 
hold.
\begin{fact}
$\participants$-excluded  \Lc-freedom  implies $\participants$-excluded  Deadlock-freedom. 
\end{fact}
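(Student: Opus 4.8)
The plan is to prove the contrapositive, or equivalently to unfold both definitions and show directly that $\participants$-excluded Lock-freedom forces the conclusion of $\participants$-excluded Deadlock-freedom. So suppose $\Nt$ is $\participants$-excluded lock-free, and take any trace $\Nt \stackred{\mypath} \Nt'$ with $\Nt' \not\rightarrow$. I want to show $\plays{\Nt'} \subseteq \participants$, so I argue by contradiction: assume there is some $\pp \in \plays{\Nt'} \setminus \participants$.

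From the Lock-freedom hypothesis applied to this reachable $\Nt'$ and this participant $\pp$, I get a continuation $\Nt' \stackred{\concat{\mypath'}\Lambda} \Nt''$ for some trace $\mypath'$ and label $\Lambda$ with $\pp \in \plays\Lambda$. The key observation is that this continuation is nonempty: it ends with a genuine transition labelled $\Lambda$, so in particular $\Nt'$ must be able to make at least one transition (either the first step of $\mypath'$ if $\mypath' \neq \ee$, or the step $\Lambda$ itself if $\mypath' = \ee$). Either way this contradicts $\Nt' \not\rightarrow$, which asserts that no $\Lambda'$ and $\Nt'''$ exist with $\Nt' \stackred{\Lambda'} \Nt'''$.

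The one point to handle carefully is the degenerate shape of the guaranteed continuation. If $\mypath' = \ee$, then $\concat{\mypath'}\Lambda$ is just the single label $\Lambda$ and $\Nt' \stackred{\Lambda} \Nt''$ directly witnesses $\Nt' \rightarrow$. If $\mypath' \neq \ee$, write $\mypath' = \concat{\Lambda_1}{\mypath''}$; then by the definition of $\Nt' \stackred{\concat{\mypath'}\Lambda} \Nt''$ as a composition of single steps, the first step is $\Nt' \stackred{\Lambda_1} \Nt_1$, which again witnesses $\Nt' \rightarrow$. I expect this case split to be the only mildly delicate part, and it is genuinely routine; the substance of the fact is simply that the Lock-freedom progress guarantee produces an actual reduction, which is incompatible with being stuck.

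Thus in every case the existence of $\pp \in \plays{\Nt'} \setminus \participants$ contradicts $\Nt' \not\rightarrow$, so no such $\pp$ exists and $\plays{\Nt'} \subseteq \participants$, establishing $\participants$-excluded Deadlock-freedom. Since nothing here relied on the particular typing judgments or on the structure of processes beyond the two definitions and the meaning of $\Nt \not\rightarrow$, the argument is short and self-contained.
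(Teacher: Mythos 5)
Your proof is correct: the paper states this Fact without proof (it is introduced with ``It is immediate to check that\dots''), and your argument is exactly the immediate unfolding of the two definitions that the authors have in mind --- a Lock-freedom continuation $\Nt' \stackred{\concat{\mypath'}\Lambda}\Nt''$ necessarily contains at least one transition, contradicting $\Nt'\not\rightarrow$. Nothing is missing, and your careful case split on whether $\mypath'=\ee$ is fine, if more detail than needed.
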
 
  The vice versa does not hold. For example if $\PP=\q!\la.\PP$, $\Q=\pp?\la.\Q$ and $\PR\neq\inact$, then $\pP\pp\PP\parN\pP\q\Q\parN\pP\pr\PR$ is $\participants$-excluded deadlock-free for any $\participants$, but $\participants$-excluded  lock-free only when $\pr\in\participants$.


\smallskip

The following example illustrates the notion of $\participants$-excluded  \Lc-freedom.

\begin{example}[Social media]\label{sm}\em
Let us consider a system describing a simplified social media situation.
Participant $\q$ is allowed to greet participant $\pp$ by sending a message   $\msg{hello}$.
Participant $\pp$ would like to reply to $\q$, but in order to do that she needs 
to be granted a higher communication level.
The task of granting permissions is performed by participant $\pu$ which,
upon $\pp$'s request ($\msg{req}$), decides -- according to some parameters -- whether the permission
is granted ($\msg{grtd}$) or denied ($\msg{dnd}$). Her decision is communicated to 
both $\pp$ and $\q$.
We assume that (for reusability motivation) $\pu$ is implemented in order to process an unbounded number of requests.
For what concerns $\pp$, however, once she is granted the higher communication level, she can return the greeting  to  
$\q$, so ending their interaction.
The above  system corresponds to the following session \\
\Cline{
\Nt \equiv \pP\pp{\PP}\parN \pP\q{\PQ}\parN \pP\pu{\U}
}

\noindent
where
$\PP=\q?\msg{hello}.\pu!\msg{req}.\pu?\Set{\msg{dnd}.\PP,\, \msg{grtd}.\q!\msg{hello}}$, 
$\PQ = \pp!\msg{hello}.\pu?\Set{\msg{dnd}.\PQ,\, \msg{grtd}.\pp?\msg{hello}}$  and 
$\U = \pp?\msg{req}.\pp!\Set{\msg{dnd}.\q!\msg{dnd}.\U,\, \msg{grtd}.\q!\msg{grtd}.\U}$. 
The session is  $\set\pu$-excluded  lock-free,  since, once $\pp$ has been granted the higher communication level, we get

\Cline{
\pP\pp{\inact}\parN \pP\q{\inact}\parN \pP\pu{\U}
}

\noindent
where $\pu$ is willing to interact but she will never be  able. 
This, however, should not be deemed a problem, since we are actually interested in
that the interactions between $\pp$ and $\q$ do proceed smoothly.
\finex
\end{example}


The behaviour of multiparty sessions can be disciplined by means of types.
Global types describe the 
conversation scenarios of multiparty  sessions,  possibly in a partial way.  

\begin{definition}[Global types]\label{def:gt}
{\em Global types} are defined by:

\Cline{\G\coDef\End\ \mid\ \Gc\pp\q\lambda\G i I}

\noindent
where $I\neq\emptyset$ and $\lambda_j\neq\lambda_h$ for  $j,h\in I$ and $j\neq h$.  
\end{definition}
 As for processes, we  allow  
 only {\em regular} global types. 
The type $\Gc\pp\q\lambda\G i I$ formalises a protocol where participant
$\pp$ must send  
to $\q$ a 
message  $\lambda_j$ for some $j\in I$ (and $\q$ must receive it) and
then, depending  on  which $\lambda_j$ was chosen by $\pp$, the protocol
continues as $\G_j$. We write $\pp\to\q:\lambda.\G$ when there is only one message.   We use $\End$ to denote the terminated protocol.  We shall omit writing trailing $\End$s in global types.

We define the {\em set of paths of a global type} $\G$, notation $\paths\G$, as the greatest set 
 of traces  such that:

\Cline{\paths\End=\set{\ee}\qquad\paths{\Gc\pp\q\lambda\G i I}=\bigcup_{i\in I}\set{\concat{\pp\lambda_i\q}\mypath\mid\mypath\in\paths{\G_i}}}

\noindent 
The {\em set of participants of a global type} is the set of participants of its paths:

\Cline{
\plays\G=\bigcup_{\mypath\in\paths\G}\plays\mypath
}

\noindent The regularity of global types ensures that  such  
sets of participants are  always   finite.

\medskip
 Boundedness is a property of global types that will enable us to get
$\participants$-excluded  \Lc-freedom  from typability.
This consists in requiring any participant of a global type to occur 
 either in all the paths or in no path   of any of its subterms  which are global types.
  Notably  this condition  is a form of fairness,   even  
  if it strongly differs from the notions of fairness discussed in~\cite{GHH21}, where fairness assumptions rule out computational paths. 
Technically,  we  shall  use  
the notions of {\em depth} and of  {\em boundedness} 
 as defined  below.
We denote by $\mypath[n]$ with $n \in \Nat$ the $n$-th  label  
in the path $\mypath$, where $1\le n \le \cardin{\mypath}$. 

\begin{definition}[Depth]\label{df:w}
Let $\G$ be a global type. 
For ${\mypath\in\paths{\G}}$  we define 

\Cline{\depth\mypath\pp = {\rm infimum} \{ n \mid \pp\in\plays{\mypath[n]} \} }

\noindent
and define $\depth\G\pp$, the \emph{depth} of $\pp$ in $\G$, as follows:

\Cline{\depth\G\pp = 
\begin{cases} 
{\rm supremum} \{ \depth\mypath\pp \mid \mypath \in \paths{\G} \}&  \text{if } \pp \in \plays{\G} \\ 
0 & \text{otherwise} 
\end{cases}
}

\end{definition}
 
 Note that $\depth\G\pp=0$ iff  $\pp \not\in \plays{\G}$. 
Moreover, if $\pp \in \plays{\G}$, but for some $\mypath \in \paths\G$ it is the case that $\pp\not\in\plays{\mypath[n]}$  for 
all $n \leq \cardin{\mypath}$,  then $\depth\mypath\pp = {\rm infimum}\, \emptyset = \infty$. 
Hence, if $\pp$ is a  participant of a global type $\G$ 
and there is some path in $\G$ where $\pp$ does not occur,  
then $\depth\G\pp= \infty$.

\begin{definition}[Boundedness]\label{def:bound}
A global type $\G$ is \emph{bounded} if  $\depth{\G'}\pp$ is finite
for all participants $\pp\in\plays{\G'}$ and  all types  
$\G'$ which occur in   $\G$. 
\end{definition}

Intuitively, this means that if $\pp \in \plays{\G'}$ for a type $\G'$ which occurs in $\G$, then the search for an interaction of the shape 
$\pp\lambda\q$ or $\q\lambda\pp$ along a path 
$\mypath \in \paths{\G'}$  terminates  
 (and recall that $\G'$ can be infinite, in which case $\G$ is such). Hence the name.
 
Example 2 of~\cite{BDL22} shows the necessity of considering all types occurring in a global type when defining boundedness and that also a finite global type can be unbounded.

%
%
%
%
%
%

Since global types are regular, the boundedness condition is decidable.  We shall allow only 
bounded global types in typing  sessions. 

\begin{example}[A global type for the social media example]
\label{ex:gtex} \em
The  intended  overall behaviour of the multiparty session $\Nt$
in Example~\ref{sm},  up to the point where the request of $\pp$ is possibly accepted by $\pu$,  is described by the following  global type $\G$.

\Cline{\begin{array}{lcl}\G = \q\to\pp{:}\msg{hello}.\,\pp\to\pu{:}\msg{req}.\,\pu\to\pp{:}
\left\{ \begin{array}{l}
                                               \msg{dnd}.\,\pu\to\q{:}\msg{dnd}.\,\G
                                 \\
                                 \msg{grtd}.\,\pu\to\q{:}\msg{grtd}.\,\pp\to\q{:}\msg{hello}
                           \end{array}\right. 
           \end{array}}

\noindent
  Typability of $\Nt$ with $\G$ -- by means of the type system defined in the next section -- will ensure (see Theorems~\ref{SR} and~\ref{SF} below)
that the  behaviours  of participants of $\G$, but $\pu$, will perfectly adhere to what $\G$ describes. 
\finex 
\end{example}

We  conclude  this section by defining  the standard LTS for global types.
 By means of such LTS we formalise the 
intended meaning of global types as overall (possibly partial) descriptions of sessions' behaviours.
 It will be used in the next section to prove the properties of Subject Reduction and Session Fidelity which, in our setting, will slightly differ from the standard ones~\cite{HYC08,Honda2016}.
 

\begin{definition}[LTS for Global Types]\label{ltsgt}
  The {\em labelled transition system (LTS) for global types} is
  specified by the  following axiom and rule: 
  
  \Cline{
    \begin{array}{c}
      \NamedRule{\rn{Ecomm}}{}{ \Gc\pp\q\lambda\G i I \stackred{\pp \lambda_j \q}{\G_j}}{j\in I}
      \\[3mm]
     \NamedRule{\rn{Icomm}}{  \G_i \stackred{\pp \lambda \q}\G'_i
      \quad
            \forall i\in I
      \qquad
      \set{\pp,\q}\cap\set{\pr,\ps}=\emptyset
      }{
     \Gc\pr\ps\lambda\G i I 
     \stackred{\pp \lambda \q}
      \Gc\pr\ps\lambda{\G'} i I  
      }{}
    \end{array}
 }
 
\end{definition}

 Axiom 
\rn{Ecomm} formalises the fact that, in a session exposing the behaviour
$\Gc\pp\q\lambda\G i I$,
there are participants $\pp$ and $\q$ ready to exchange  a message $\lambda_j$ for any $j\in I$,  the former as
sender and the latter as receiver. If such a communication is actually performed, the resulting 
session will expose the behaviour  $\G_j$. 

Rule \rn{Icomm} makes sense since, in a global type
$\Gc\pr\ps\lambda\G i I $,  communications 
involving
participants $\pp$ and $\q$, ready to interact with each other
uniformly in all branches, can  be performed  
if neither of them is involved in a previous interaction between $\pr$ and $\ps$.
In this case, the interaction between $\pp$ and $\q$ is independent of the
choice of $\pr$, and may be executed before it.

\section{Type System and its Properties}\label{ts}

 As in~\cite{BDL22,CDG22,DGD22},   our   type assignment allows for a simple 
treatment of many technical issues, by avoiding  projections,  local types 
and subtyping~\cite{HYC08,Honda2016}.  
The novelty of the type system we present in this section  with respect to those
in~\cite{BDL22,CDG22,DGD22}
is that the judgments are parametrised by a set $\participants$ of participants. These are the participants whose  \Lc-freedom  we do not care about. 
The simplicity of our calculus allows us to formulate a type system deriving directly global types for multiparty sessions,  i.e. judgments of the form $\tyng\G\Nt\participants$  (where $\G$ is bounded).  Here and in the following the double line indicates that the rules are interpreted coinductively~\cite[Chapter 21]{pier02}.  

\begin{definition}[Type system]\label{def:type-system} 
The type system $\vdash_\participants$ is defined by the following axiom and  rules, 
 where sessions are considered modulo structural  congruence:  

\smallskip

\Cline{\NamedCoRule{\rn{\End}}{\tyng\End{\pP\pp\inact}\emptyset}{}{}{}}

\smallskip

\Cline{\NamedCoRule{\rn{Comm}}
{\mbox{$\begin{array}{c}\tyng{\G_i}{\pP\pp{\PP_i}}{\participants_i}\parN\pP\q{\Q_i}\parN\Nt\\[1mm]
(\plays{\G_i}\cup\participants_i)\setminus\set{\pp,\q}=\plays\Nt\quad\forall i\in I \\
\ \end{array}$}}
{\tyng{\G}{\pP\pp{\Pou\q\lambda\PP i I}\parN\pP\q{\Pin\pp\lambda\Q j J}\parN\Nt}
{\participants}}
{
\begin{array}{c}
\G=\Gc\pp\q\lambda\G i I\quad
\G\mathrm{\ is\ bounded}\\
\participants=\bigcup_{i\in I}\!\participants_i\quad
I\subseteq J
\end{array}}}

\smallskip

\Cline{\NamedCoRule{\rn{\sw}}
{\tyng{\G}{\Nt_1}{\participants_1}}
{\tyng{\G}{\Nt_1\parN\Nt_2}{\participants_1\cup\participants_2}}
{ \participants_2 = \plays{\Nt_2} \neq\emptyset 
}{}}

\end{definition}

Axiom \rn{\End} simply states that the null session has  the 
behaviour described by 
$\End$. 
In the null session there is obviously no participant whose  \Lc-freedom  we do not care about,
hence the $\emptyset$ subscript. An alternative and sound version of this axiom could be 
$\tyng\End{\Nt}{\plays{\Nt}}$ for any $\Nt$. Such a judgment however can be easily derived using Rule  \rn{\sw}.

Rule \rn{Comm} just adds simultaneous communications to global types and to corresponding processes inside sessions. 
Since  the set $\participants_i$ contains the ignored participants 
in branch $i$, 
the possibly non lock-free participants in the conclusion  must  be   
$\bigcup_{i\in I}\!\participants_i$.
 Note that Rule \rn{Comm} allows more inputs 
 than corresponding outputs, in agreement with the condition in Rule \rn{Comm-T}.
 It also allows more branches 
in the input process than in the global type, just mimicking the subtyping for session types~\cite{DH12}. 
 Instead, the number of branches in the output process and the global type must be the same. 
This does not 
restrict typability, 
while it improves Session Fidelity 
(by arguing, respectively, as in~\cite{BDLT21} and~\cite{BDL22}). 
The condition  $(\plays{\G_i}\cup\participants_i)\setminus\set{\pp,\q}=\plays\Nt$  for all $i\in I$ 
ensures that  the participants in the session are exactly those we keep track of  either in $\G$ and/or in $\participants$. 
This condition prevents,
for example, to derive   $\tyng{\G}{\pP\pp{\PP}\parN\pP\q{\PQ}\parN\pP\pr\R}{\emptyset}$,
where $\G=\pp\to\q:\lambda.\G$,  $\PP=\q!\lambda.\PP$, $\PQ =\pp?\lambda.\PQ$ and  
$\R\neq\inact$  is arbitrary.
 Note that, instead, it is possible to derive  $\tyng{\G}{\pP\pp{\PP}\parN\pP\q{\PQ}\parN\pP\pr\R}{\Set{\pr}}$  with  $\R\neq\inact$  arbitrary. 
 Lock-freedom can be  ensured only for the participants of  $\Nt$ 
not belonging to  $\participants$. 

Rule  \rn{\sw} enables to type check just a sub session as far as we do not care about
the  \Lc-freedom  of the participants of the rest of the session. We keep track of such participants
in the subscript of the entailment symbol. 
The condition $\participants_2\neq\emptyset$  forbids infinite applications of this rule.  This condition allows us to use coinduction on typing derivations.  
 
 Sessions are considered modulo structural  congruence  
 in typing rules following~\cite{BDL22,CDG22,DGD22}. Clearly this could be avoided by adding an obvious typing rule, but we prefer to have a lighter type system.

 The regularity of processes and global types ensures the decidability of type checking.

\begin{figure}[b]
$
\mbox{\Large $\DD$} =\
\prooftree
         \prooftree
                  \prooftree
                         \prooftree
                            \mbox{\Large $\DD$}
                         \Justifies
                              \tyng{\G_3}{\pP\pp{\PP}\parN \pP\q{\PQ_1}\parN \pP\pu{\q!\msg{dnd}.\PU}}{\Set{\pu}}
                         \endprooftree     
                         \qquad
                         \prooftree
                                 \prooftree
                                          \prooftree
                                                 \prooftree
                                                 \Justifies
                                                     \tyng{\End}{\pP\pp{\inact}\parN \pP\q{\inact}}{\emptyset}
                                                 \using \rn{$\End$}
                                                 \endprooftree
                                           \Justifies
                                                   \tyng{\G_5}{\pP\pp{\PP_2}\parN \pP\q{\PQ_2}}{\emptyset}
                                           \endprooftree  
                                 \Justifies
                                     \tyng{\G_5}{\pP\pp{\PP_2}\parN \pP\q{\PQ_2}\parN \pP\pu{\PU}}{\Set{\pu}}
                                  \using \rn{\sw}   
                                  \endprooftree   
                         \Justifies
                             \tyng{\G_4}{\pP\pp{\PP_2}\parN \pP\q{\PQ_1}\parN \pP\pu{\q!\msg{grtd}.\PU}}{\Set{\pu}}
                          \endprooftree   
                  \Justifies
                     \tyng{\G_2}{\pP\pp{\pu?\Set{\msg{dnd}.\PP,\, \msg{grtd}.\PP_2}}\parN \pP\q{\PQ_1}\parN \pP\pu{\PU_1}}{\Set{\pu}}
                  \endprooftree
         \Justifies
             \tyng{\G_1}{\pP\pp{\PP_1}\parN \pP\q{\PQ_1}\parN \pP\pu{\PU}}{\Set{\pu}}
          \endprooftree
\Justifies
          \tyng{\G}{\pP\pp{\PP}\parN \pP\q{\PQ}\parN \pP\pu{\PU}}{\Set{\pu}}
\endprooftree
$
\caption{A type derivation for  the social media.}\label{fsm}
 \end{figure}

\begin{example}[Typing of the social media]  
\label{ex:typingwe} \em

Let $\PP, \Q$ and $\PU$ be defined as in Example~\ref{sm} and

\Cline{\begin{array}{c}
\PP_1=\pu!\msg{req}.\pu?\Set{\msg{dnd}.\PP,\, \msg{grtd}.\PP_2},\, 
\PP_2=\q!\msg{hello},\,
\PQ_1 = \pu?\Set{\msg{dnd}.\PQ,\, \msg{grtd}.\PQ_2},\,
\PQ_2 = \pp?\msg{hello}\\
\PU_1 = \pp!\Set{\msg{dnd}.\q!\msg{dnd}.\PU,\, \msg{grtd}.\q!\msg{grtd}.\PU}\end{array}} 

Moreover, let $\G$  be defined as in Example~\ref{ex:gtex} and  
$\G_1 = \pp\to\pu{:}\msg{req}.\, \G_2
$ 

\Cline{ 
\G_2 = \pu\to\pp{:}
\left\{ \begin{array}{l}
                                               \msg{dnd}.\, \G_3
                                 \\
                                 \msg{grtd}.\,\G_4
                           \end{array}\right. 
 \quad
\G_3=\pu\to\q{:}\msg{dnd}.\,\G \quad
\G_4=\pu\to\q{:}\msg{grtd}.\,\G_5 \quad
\G_5 =\pp\to\q{:}\msg{hello}
} 

Figure~\ref{fsm} shows a derivation of the global type $\G$ of Example~\ref{ex:gtex} for the multiparty session $\Nt$ of  Example~\ref{sm}.  
The missing rule names are all
\rn{Comm}.  
 Note how in the leftmost branch of the 
 derivation it is possible to get $\Set{\pu}$ as subscript without recurring to Rule
\rn{\sw} thanks to the  infiniteness 
of the branch. 
For the same motivation, in case we had 
$\PP=\q?\msg{hello}.\pu!\msg{req}.\pu?\Set{\msg{dnd}.\PP,\, \msg{grtd}.\PP'}$
 with $\PP'= \q!\msg{hello}.\PP'$  and   
 $\PQ = \pp!\msg{hello}.\pu?\Set{\msg{dnd}.\PQ,\, \msg{grtd}.\PQ'}$ with $\PQ' = \pp?\msg{hello}.\PQ'$
(namely in case $\pp$ and $\q$ kept on indefinitely exchanging $\msg{hello}$ messages after 
receiving the \msg{grtd} message)
the whole resulting session  would be typable without recurring to Rule \rn{\sw}.   
\finex 
\end{example}

 We note that session participants are of three different kinds in a typing judgment:
\begin{enumerate}
\item the lock-free participants which behave as pointed out by the global type; these participants occur in the global type but do not belong to the set of ignored participants;
\item the participants which ``partially'' behave as pointed out by the global type and can  get  
stuck; these participants occur in the global type and belong to the set of ignored participants;
\item the participants which behave in an  unpredictable 
way; these participants do not occur in the global type but belong to the set of ignored participants.
\end{enumerate}

We  observe  also  that $\tyng{}{}{\emptyset}$ coincides with 
the typing relation of~\cite{BDL22}.


\medskip

In  the remainder 
of  this section we will show the main properties of our type system, i.e. Subject Reduction, Session Fidelity and $\participants$-excluded Lock-freedom. We start with some lemmas which are handy for the  subsequent  
proofs.  All proofs are by  coinduction on $\tyng\G\Nt\participants$ and by cases on the last applied
rule. 

The first lemma  states 
that, when $\tyng\G\Nt\participants$, all participants of $\Nt$ must be participants of $\G$ and/or must belong to the set 
$\participants$.

\begin{lemma}\label{lem:triv}
$\tyng\G\Nt\participants$ implies $\plays\G\cup\participants=\plays\Nt$. 
\end{lemma}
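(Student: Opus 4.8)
The plan is to argue by coinduction on the derivation of $\tyng\G\Nt\participants$, following the methodology announced just before the lemma, and to split on the last rule applied. The one preliminary computation I need is the set of participants of a communication type: unfolding the definitions of $\paths{\cdot}$ and $\plays{\cdot}$, and using $I\neq\emptyset$, one gets $\plays{\Gc\pp\q\lambda\G i I}=\set{\pp,\q}\cup\bigcup_{i\in I}\plays{\G_i}$, since every path of $\G$ starts with the label $\pp\lambda_i\q$ for some $i\in I$ and then continues as a path of $\G_i$. I also record that, because $\pP\pp{\PP}\parN\pP\q\Q\parN\Nt$ is a well-formed session, the participants are pairwise distinct, so $\pp\neq\q$ and $\pp,\q\notin\plays\Nt$; hence $\plays{\pP\pp{\PP}\parN\pP\q\Q\parN\Nt}=\set{\pp,\q}\cup\plays\Nt$ whenever $\PP,\Q\neq\inact$.

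For the base case \rn{\End}, both sides are empty: $\plays\End\cup\emptyset=\emptyset=\plays{\pP\pp\inact}$, the last equality holding because $\inact$ contributes no active participant. For \rn{Comm} the conclusion is $\tyng\G{\pP\pp{\Pou\q\lambda\PP i I}\parN\pP\q{\Pin\pp\lambda\Q j J}\parN\Nt}\participants$ with $\G=\Gc\pp\q\lambda\G i I$, $\participants=\bigcup_{i\in I}\participants_i$, and the side condition $(\plays{\G_i}\cup\participants_i)\setminus\set{\pp,\q}=\plays\Nt$ for every $i\in I$. Combining the computation above with $\participants=\bigcup_{i\in I}\participants_i$ gives $\plays\G\cup\participants=\set{\pp,\q}\cup\bigcup_{i\in I}(\plays{\G_i}\cup\participants_i)$, and feeding in the side condition (together with $I\neq\emptyset$) collapses the union to $\set{\pp,\q}\cup\plays\Nt$, which is exactly $\plays{\pP\pp{\Pou\q\lambda\PP i I}\parN\pP\q{\Pin\pp\lambda\Q j J}\parN\Nt}$. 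The point worth stressing is that this case is settled by the side condition alone, with no appeal to the coinductive hypothesis. Finally, for \rn{\sw} the conclusion is $\tyng\G{\Nt_1\parN\Nt_2}{\participants_1\cup\participants_2}$ with premise $\tyng\G{\Nt_1}{\participants_1}$ and $\participants_2=\plays{\Nt_2}$; here the \emph{coinductive hypothesis} gives $\plays\G\cup\participants_1=\plays{\Nt_1}$, whence $\plays\G\cup(\participants_1\cup\participants_2)=\plays{\Nt_1}\cup\plays{\Nt_2}=\plays{\Nt_1\parN\Nt_2}$.

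The content of the lemma is light, so the only genuine care is in the justification of the coinduction rather than in any calculation. The observation that makes the argument well founded is that \rn{Comm} and \rn{\End} establish the equation outright, while the single rule that consumes the coinductive hypothesis, \rn{\sw}, strictly shrinks the session: its side condition forces $\participants_2=\plays{\Nt_2}\neq\emptyset$, so $\Nt_1$ has strictly fewer participants than $\Nt_1\parN\Nt_2$. Consequently the self-reference is guarded, and the same proof could equivalently be presented as an ordinary induction on the number of participants of $\Nt$. I would therefore not expect a real obstacle; the one place to be careful is the participant computation for $\G$, where the hypothesis $I\neq\emptyset$ is needed both to make $\pp,\q$ actually occur in $\plays\G$ and to ensure that the union over $i\in I$ of the side-condition equalities is not vacuous.
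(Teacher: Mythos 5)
Your proof is correct and follows essentially the same route as the paper's: coinduction on the derivation of $\tyng\G\Nt\participants$ with a case split on the last rule, where \rn{Comm} is discharged outright by the side condition $(\plays{\G_i}\cup\participants_i)\setminus\set{\pp,\q}=\plays\Nt$ and only \rn{\sw} invokes the coinductive hypothesis together with $\participants_2=\plays{\Nt_2}$. Your extra material — the explicit computation of $\plays{\Gc\pp\q\lambda\G i I}$ and the observation that $\participants_2\neq\emptyset$ makes the coinduction well founded (equivalently, an induction on the number of participants) — just spells out details the paper leaves implicit.
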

\begin{proof}  
%
{\em Rule} \rn{Comm}. Immediate by the condition  $(\plays{\G_i}\cup\participants_i)\setminus\set{\pp,\q}=\plays\Nt$ for all $i\in I$.   

{\em Rule} \rn{\sw}. In such a case, $\Nt\equiv\Nt_1\parN\Nt_2$
and $\participants = \participants_1\cup\participants_2$. 
By coinduction we get  $\plays\G\cup\participants_1=\plays{\Nt_1}$.  We hence get the thesis 
by  the condition $\participants_2 = \plays{\Nt_2}$.
\qed
\end{proof}

\medskip
 By the above lemma, from $\tyng\G\Nt\participants$ it is immediate to get also that $\pp\in\plays\G$ implies $\pp\in\plays\Nt$ and that
$\pp\in\plays\Nt$ and $\pp\not\in\participants$ imply
$\pp\in\plays\G$.  

\smallskip

 The process of a participant which does not occur in the global type can be freely replaced, since typing  ensures  nothing about the behaviour of this participant.

\begin{lemma}\label{a} 
If  $\tyng\G{\pP\pp\PP\parN\Nt}\participants$   and $\PP\neq\inact$ and 
$\pp\not\in\plays{\G}$, 
then $\tyng\G{\pP\pp{\PP'}\parN\Nt}{\participants'}$ with $\participants'\subseteq \participants$ for an  arbitrary $\PP'$.
\end{lemma}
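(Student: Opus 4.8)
The plan is to prove the statement by coinduction on the derivation of $\tyng\G{\pP\pp\PP\parN\Nt}\participants$, following the coinductive style announced for the section, with a case analysis on the rule at the root. I would first record two preliminary facts that drive the argument. By Lemma~\ref{lem:triv} we have $\plays\G\cup\participants=\plays{\pP\pp\PP\parN\Nt}$; since $\PP\neq\inact$ makes $\pp$ an active participant while $\pp\notin\plays\G$, this forces $\pp\in\participants$. The point needing care is that the bound $\participants'\subseteq\participants$ is too weak to serve as the coinductive hypothesis, because \rn{Comm} imposes an \emph{equality} side condition. I would therefore carry through the stronger invariant: the conclusion holds with $\participants'=\participants$ when $\PP'\neq\inact$, and with $\participants'=\participants\setminus\set\pp$ when $\PP'=\inact$. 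Both are contained in $\participants$, so either one yields the lemma.

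The case \rn{\End} cannot occur at the root, since that axiom types (modulo congruence) a session with no active participant, whereas $\pp$ is active in $\pP\pp\PP\parN\Nt$. For \rn{Comm}, the communication it introduces is between two participants that both occur in $\G$, hence both distinct from $\pp$; consequently $\pp$ lies in the residual session common to the conclusion and to all premises. Moreover $\plays\G$ is the union of those two participants with $\bigcup_{i}\plays{\G_i}$, so $\pp\notin\plays{\G_i}$ for every branch $i$, and the coinductive hypothesis applies to each premise, replacing $\PP$ by $\PP'$ there. Reassembling with \rn{Comm} requires re-establishing its side condition, an equality between $(\plays{\G_i}\cup\participants_i')$ minus the two communicating participants and the set of active participants of the new residual session. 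When $\PP'\neq\inact$ the active participants are unchanged and the original equality is reused verbatim with $\participants_i'=\participants_i$; when $\PP'=\inact$ the active set loses exactly $\pp$, and because $\pp\notin\plays{\G_i}$ the equality is restored with $\participants_i'=\participants_i\setminus\set\pp$. Taking $\participants'=\bigcup_{i}\participants_i'$ then gives the prescribed set in both subcases.

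For \rn{\sw}, writing $\pP\pp\PP\parN\Nt\equiv\Nt_1\parN\Nt_2$ with $\participants=\participants_1\cup\participants_2$ and $\participants_2=\plays{\Nt_2}\neq\emptyset$, the active participant $\pp$ lies in exactly one of $\Nt_1,\Nt_2$ since session participants are pairwise distinct. If $\pp\in\plays{\Nt_1}$ I would apply the coinductive hypothesis to the premise $\tyng\G{\Nt_1}{\participants_1}$ and re-apply \rn{\sw} with the unchanged $\Nt_2$; if $\pp\in\plays{\Nt_2}$ I would instead adjust $\Nt_2$ directly. The union bookkeeping goes through using $\pp\notin\participants_2$ in the first case and $\pp\notin\participants_1$ in the second (both obtained from Lemma~\ref{lem:triv} applied to the relevant subsession), which lets the $\setminus\set\pp$ commute with the union. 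The only genuinely delicate point, and the main obstacle, is productivity of the coinductive construction when $\PP'=\inact$ deletes $\pp$: if this empties the subsession governed by \rn{\sw}, so that the proviso $\participants_2\neq\emptyset$ would be violated, I would simply drop that weakening step and use the remaining derivation directly rather than build an ill-formed \rn{\sw} instance. Everything else reduces to routine set manipulation.
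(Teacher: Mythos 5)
Your proof is correct and takes essentially the same route as the paper's: coinduction on the typing derivation with cases on the last applied rule (\rn{Comm} and \rn{\sw}), using Lemma~\ref{lem:triv} for the participant bookkeeping. The strengthened invariant you carry ($\participants'=\participants$ unless $\PP'=\inact$, in which case $\participants'=\participants\setminus\set\pp$) is precisely what the paper records only in the note following its proof, and your explicit treatment of the side conditions of \rn{Comm} and of the degenerate \rn{\sw} case (where replacing $\PP$ by $\inact$ would empty $\Nt_2$) fills in details the paper's terser argument glosses over.
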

\begin{proof}
%
{\em Rule} \rn{Comm}. Then $\G=\Gc\q\pr\lambda\G i I$ and $\Nt\equiv\pP\q  {\Pou\pr\lambda\Q i I} \parN\pP \pr {\Pin\q\lambda\PR j J }\parN\Nt_0$   and 
$I\subseteq J$ and  $\tyng{\G_i}{\pP\pp\PP\parN\pP\q{\Q_i}\parN\pP\pr{\PR_i}\parN\Nt_0}{\participants_i}$ for all $i \in I$ with $\participants=\bigcup_{i\in I}\participants_i$. By coinduction  we get  \linebreak $\tyng{\G_i}{\pP\pp{\PP'}\parN\pP\q{\Q_i}\parN\pP\pr{\PR_i}\parN\Nt_0}{\participants_i'}$ with $\participants_i'\subseteq \participants_i$ for all $i \in I$ for  an  arbitrary $\PP'$. We conclude using Rule  \rn{Comm}.

{\em Rule} \rn{\sw}. Then $\participants=\participants_1\cup\participants_2$  and   
$\pP\pp\PP\parN\Nt\equiv\Nt_1\parN\Nt_2$ and $\tyng\G{\Nt_1}{\participants_1}$. If $\Nt_1\equiv\pP\pp\PP\parN\Nt_1'$ by coinduction $\tyng\G{\pP\pp{\PP'}\parN\Nt_1'}{\participants_1'}$  with $\participants_1'\subseteq \participants_1$ for arbitrary $\PP'$
and we conclude using Rule  \rn{\sw}. If $\Nt_2\equiv\pP\pp\PP\parN\Nt_2'$ we can apply Rule  \rn{\sw} to  $\tyng\G{\Nt_1}{\participants_1}$ and $\pP\pp{\PP'}\parN\Nt_2'$  for arbitrary $\PP'$.
\qed
\end{proof}

\medskip
\noindent
 Note  that  in previous lemma  $\participants'=\participants$ unless $\PP'=\inact$ and in this case $\participants'\cup\set\pp=\participants$. 

\smallskip

If $\pP\pp{\Pb\q\lambda\PP i I} \in \Nt$ we say that $\q$ is the {\em top partner} of $\pp$ and we write $\tp\Nt\pp=\q$.  Note  that we can have $\q\not\in\plays{\Nt}$ or  $\tp\Nt\q\not=\pp$. 
For example, if $\Nt\equiv\pP\pp{\Pb\q\lambda\PP i I}\parN\pP\q{\Pb\pr{\lambda'}\Q j J}$,
we have $\tp\Nt\pp=\q$ and $\tp\Nt\q=\pr\neq\pp$. 

 Typing  ensures  that if a participant occurs in a global type then also her top partner occurs in the global type.

\begin{lemma}\label{d} 
If $\tyng\G\Nt\participants$ and $\pp\in\plays{\G}$, then $\tp\Nt\pp\in\plays{\G}$.
\end{lemma}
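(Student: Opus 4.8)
The plan is to follow the announced scheme: coinduction on the derivation of $\tyng\G\Nt\participants$ with a case analysis on the last applied rule. First note that $\pp\in\plays\G$ gives $\pp\in\plays\Nt$ by Lemma~\ref{lem:triv}, so $\pp$ carries a non-terminated process and $\tp\Nt\pp$ is well defined. Rule \rn{\End} is vacuous, since there $\plays\G=\plays\End=\emptyset$, contradicting $\pp\in\plays\G$.

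For Rule \rn{Comm}, write $\G=\Gc\pa\pb\lambda\G i I$ and $\Nt\equiv\pP\pa{\Pou\pb\lambda\PP i I}\parN\pP\pb{\Pin\pa\lambda\Q j J}\parN\Nt_0$, with premises $\tyng{\G_i}{\pP\pa{\PP_i}\parN\pP\pb{\Q_i}\parN\Nt_0}{\participants_i}$ for $i\in I$. From the definition of $\paths{\cdot}$ one gets $\plays\G=\set{\pa,\pb}\cup\bigcup_{i\in I}\plays{\G_i}$, so $\pp\in\plays\G$ splits into three cases. If $\pp=\pa$ then $\tp\Nt\pp=\pb\in\plays\G$, and if $\pp=\pb$ then $\tp\Nt\pp=\pa\in\plays\G$; both close the goal at once. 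If $\pp\notin\set{\pa,\pb}$, then $\pp\in\plays{\G_i}$ for some $i$, and the located process of $\pp$ lies in $\Nt_0$, hence is untouched by the reduction, so $\tp\Nt\pp=\tp{\pP\pa{\PP_i}\parN\pP\pb{\Q_i}\parN\Nt_0}{\pp}$; coinduction on the $i$-th premise then yields $\tp\Nt\pp\in\plays{\G_i}\subseteq\plays\G$.

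For Rule \rn{\sw} we have $\Nt\equiv\Nt_1\parN\Nt_2$ with premise $\tyng\G{\Nt_1}{\participants_1}$. Lemma~\ref{lem:triv} applied to this premise gives $\plays\G\subseteq\plays{\Nt_1}$, so $\pp\in\plays{\Nt_1}$ and the located process of $\pp$ sits in $\Nt_1$; thus $\tp\Nt\pp=\tp{\Nt_1}{\pp}$, and coinduction on the premise delivers $\tp{\Nt_1}{\pp}\in\plays\G$.

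The delicate point — the main obstacle — is to see that the coinductive step in the third \rn{Comm} case is sound, i.e. that deferring the claim to a strictly smaller subterm of $\G$ cannot loop forever. This is where boundedness enters: since $\G$ is bounded and $\pp\in\plays\G$, the quantity $\depth\G\pp$ is a finite natural number (indeed $\pp$ then occurs on every path), and descending into any branch strictly decreases it, so after finitely many \rn{Comm} steps one necessarily reaches the subcase $\pp\in\set{\pa,\pb}$, where the conclusion holds outright. Since each \rn{\sw} step strictly shrinks the finite set of participants and so cannot be iterated indefinitely, the lexicographic measure $(\depth\G\pp,\cardin{\plays\Nt})$ witnesses that the unfolding is well founded, which legitimises the coinductive reasoning.
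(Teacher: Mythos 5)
Your proof is correct and takes essentially the same approach as the paper's: (co)induction on the derivation of $\tyng\G\Nt\participants$ with cases on the last rule, concluding immediately when $\pp$ is one of the two top communicating participants, deferring to a premise otherwise, and using Lemma~\ref{lem:triv} for Rule \rn{\sw}. The only difference is in where boundedness is deployed: the paper uses it to note that $\pp\in\plays{\G_i}$ for \emph{all} $i\in I$ and keeps the argument purely coinductive, whereas you invoke it to make the underlying well-foundedness explicit through the lexicographic measure $(\depth\G\pp,\cardin{\plays\Nt})$ --- a sound and indeed clarifying refinement of the same argument.
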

\begin{proof} By Lemma~\ref{lem:triv}  and   $\pp\in\plays{\G}$ we have that $\pp\in\plays{\Nt}$ and then $\tp\Nt\pp$ is defined. So, let $\tp\Nt\pp=\q$. 

{\em Rule} \rn{Comm}. Then $\G=\Gc\pr\ps\lambda\G i I$ and $\Nt\equiv\pP\pr  {\Pou\ps\lambda\PR i I} \parN\pP \ps {\Pin\pr\lambda\PS j J }\parN\Nt_0$  with $I\subseteq J$ and  $\tyng{\G_i}{\pP\pr{\PR_i}\parN\pP\ps{\PS_i}\parN\Nt_0}{\participants_i}$ for all $i \in I$ with $\participants=\bigcup_{i\in I}\participants_i$. If $\pp\in\set{\pr,\ps}$, then $\set{\pp,\q}=\set{\pr,\ps}$ and we are done. Otherwise $\tp\Nt\pp=\q$ implies $\tp{\pP\pr{\PR_i}\parN\pP\ps{\PS_i}\parN\Nt_0}\pp=\q$
for all $i \in I$. Moreover
$\pp\in\plays\G$ implies $\pp\in\plays{\G_i}$ for all $i\in I$, since $\G$ is bounded. By coinduction we get $\q\in\plays{\G_i}$ for all $i\in I$. We conclude $\q\in\plays\G$.

{\em Rule} \rn{\sw}. Then $\participants=\participants_1\cup\participants_2$  and  $\Nt\equiv\Nt_1\parN\Nt_2$ and $\tyng\G{\Nt_1}{\participants_1}$. Since  by Lemma~\ref{lem:triv}  
$\pp\in\plays{\G}$ implies $\pp\in\plays{\Nt_1}$ we have $\tp{\Nt_1}\pp=\q$. We get by coinduction $\q\in\plays\G$. \qed
\end{proof}

\medskip

  In our particular setting, what Subject Reduction  ensures depends on 
which participants we consider (unlike its standard version, e.g. in~\cite{BDLT21} and~\cite{BDL22}).
In particular, it ensures that, when the involved participants occur in the global types,
 the transitions of well-typed sessions are
mimicked by those of global types
 (namely they  proceed  as prescribed by the global type). 
Otherwise the reduced session can be typed by the same global type.  Key for this proof is Lemma~\ref{d}, which ensures that the communicating participants either both occur or both  do not occur in the global type.

\begin{theorem}[Subject Reduction]\label{SR}
Let $\tyng\G\Nt\participants$ and $\Nt  \stackred{\pp \lambda \q} \Nt'$.
\begin{enumerate}[i)]
\item\label{SR1}
  If  $\set{\pp,\q}\subseteq\plays\G$, then 
  $\G \stackred{\pp \lambda \q}\G'$ and $\tyng{\G'}{\Nt'}{\participants'}$ with $\participants'\subseteq\participants$.
\item\label{SR2}
  If  $\pp,\q\not\in\plays\G$, then
  $\tyng{\G}{\Nt'}{\participants'}$ with $\participants'\subseteq\participants$.
\end{enumerate}
\end{theorem}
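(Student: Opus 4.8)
The plan is to argue by coinduction on the derivation of $\tyng\G\Nt\participants$, with a case analysis on the last rule used, exactly as announced for the other results of the section. Before splitting into rules I would first check that the two alternatives \ref{SR1} and \ref{SR2} are exhaustive. Since $\Nt\stackred{\pp\lambda\q}\Nt'$ can only come from Rule \rn{Comm-T}, we have $\pP\pp{\Pou\q\lambda\PP i I}\in\Nt$ and $\pP\q{\Pin\pp\lambda\Q j J}\in\Nt$, whence $\tp\Nt\pp=\q$ and $\tp\Nt\q=\pp$. Lemma~\ref{d} then gives $\pp\in\plays\G\Rightarrow\q\in\plays\G$ and $\q\in\plays\G\Rightarrow\pp\in\plays\G$, so that $\pp\in\plays\G$ iff $\q\in\plays\G$: either both communicating participants occur in $\G$ (case \ref{SR1}) or neither does (case \ref{SR2}). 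The axiom \rn{\End} is vacuous, since $\pP\pp\inact$ has no transition.

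For Rule \rn{Comm}, rename the rule's communicating pair to $\pr,\ps$ to avoid a clash with the transition's $\pp,\q$, so $\G=\Gc\pr\ps\lambda\G i I$ and $\Nt\equiv\pP\pr{\Pou\ps\lambda\PR i I}\parN\pP\ps{\Pin\pr\lambda\PS j J}\parN\Nt_0$ with premises $\tyng{\G_i}{\pP\pr{\PR_i}\parN\pP\ps{\PS_i}\parN\Nt_0}{\participants_i}$ and $\participants=\bigcup_{i\in I}\participants_i$. There are two situations. If the fired communication is the head one, then since $\pr$ only outputs to $\ps$ and $\ps$ only inputs from $\pr$ we must have $\pp=\pr$, $\q=\ps$, $\lambda=\lambda_h$ and $\Nt'\equiv\pP\pr{\PR_h}\parN\pP\ps{\PS_h}\parN\Nt_0$ for some $h\in I$; Rule \rn{Ecomm} gives $\G\stackred{\pp\lambda\q}\G_h$ and the $h$-th premise is exactly $\tyng{\G_h}{\Nt'}{\participants_h}$ with $\participants_h\subseteq\participants$, settling \ref{SR1}. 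Otherwise the same observation forces $\set{\pp,\q}\cap\set{\pr,\ps}=\emptyset$, so the communication takes place inside $\Nt_0$, i.e. $\Nt_0\stackred{\pp\lambda\q}\Nt_0'$; by Fact~\ref{fact:arb} each premise session also reduces, $\pP\pr{\PR_i}\parN\pP\ps{\PS_i}\parN\Nt_0\stackred{\pp\lambda\q}\pP\pr{\PR_i}\parN\pP\ps{\PS_i}\parN\Nt_0'$. I would then apply the coinductive hypothesis to each premise in the branch selected by the dichotomy. In case \ref{SR2} ($\pp,\q\notin\plays\G\supseteq\plays{\G_i}$) this yields $\tyng{\G_i}{\pP\pr{\PR_i}\parN\pP\ps{\PS_i}\parN\Nt_0'}{\participants_i'}$ with $\participants_i'\subseteq\participants_i$, and Rule \rn{Comm} with the unchanged $\G$ re-types $\Nt'$. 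In case \ref{SR1}, boundedness of $\G$ forces $\pp,\q\in\plays{\G_i}$ for all $i$, the hypothesis gives $\G_i\stackred{\pp\lambda\q}\G_i'$ together with the typings of the reducts, and Rule \rn{Icomm} assembles $\G\stackred{\pp\lambda\q}\G'$ with $\G'=\Gc\pr\ps\lambda{\G'} i I$, after which Rule \rn{Comm} re-types $\Nt'$.

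For Rule \rn{\sw}, $\Nt\equiv\Nt_1\parN\Nt_2$ with $\tyng\G{\Nt_1}{\participants_1}$, $\participants_2=\plays{\Nt_2}\neq\emptyset$ and $\participants=\participants_1\cup\participants_2$; recall that participants of a session are pairwise distinct, so $\plays{\Nt_1}\cap\plays{\Nt_2}=\emptyset$, while Lemma~\ref{lem:triv} gives $\plays\G\subseteq\plays{\Nt_1}$. If both $\pp,\q\in\plays{\Nt_1}$ the transition is internal to $\Nt_1$, so $\Nt_1\stackred{\pp\lambda\q}\Nt_1'$ and $\Nt'\equiv\Nt_1'\parN\Nt_2$; the coinductive hypothesis on $\tyng\G{\Nt_1}{\participants_1}$ followed by a reapplication of Rule \rn{\sw} (with the unchanged $\Nt_2$) gives the claim in both \ref{SR1} and \ref{SR2}. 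If instead the transition touches $\Nt_2$, then whichever of $\pp,\q$ lies in $\plays{\Nt_2}$ lies outside $\plays{\Nt_1}\supseteq\plays\G$, and the dichotomy forces both $\pp,\q\notin\plays\G$, so we are in case \ref{SR2}. Here I would use Lemma~\ref{a} to replace, inside $\Nt_1$, the process of the (ignored) participant touched by the communication, retyping $\Nt_1$ with the same $\G$ and a smaller $\participants_1'$, and then reassemble with the residual of $\Nt_2$ by Rule \rn{\sw} — or conclude directly when that residual has no participants left.

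The step I expect to be the real obstacle is the reconstruction in case \ref{SR1} of Rule \rn{Comm}: to close with \rn{Comm} one must discharge its side conditions for the freshly built $\G'=\Gc\pr\ps\lambda{\G'} i I$. The participant-set equalities $(\plays{\G_i'}\cup\participants_i')\setminus\set{\pr,\ps}=\plays{\Nt_0'}$ are routine, using Lemma~\ref{lem:triv} on the coinductive conclusions together with $\plays{\Nt_0'}\subseteq\plays{\Nt_0}$ (so $\pr,\ps\notin\plays{\Nt_0'}$). The delicate point is boundedness of $\G'$: the continuations $\G_i'$ are bounded for free, being subjects of derivable judgments, but boundedness of the rebuilt head has to be argued from boundedness of $\G$. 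This amounts to showing that the global-type LTS preserves boundedness, which holds because a transition removes a single interaction uniformly from all paths and hence cannot increase any participant's depth; I would isolate this as a separate (co)inductive lemma on $\stackred{\pp\lambda\q}$ (as in the development of~\cite{BDL22}) and invoke it here.
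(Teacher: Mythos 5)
Your proof follows the paper's skeleton almost exactly: the exhaustiveness of cases \emph{i)} and \emph{ii)} is obtained from Lemma~\ref{d} just as in the paper; case \emph{i)} is proved by coinduction on the typing with the same split inside Rule \rn{Comm} (head communication closed by Axiom \rn{Ecomm}; disjoint communication closed by Fact~\ref{fact:arb}, the coinductive hypothesis, Rule \rn{Icomm} and Rule \rn{Comm}); the \rn{\sw} case is handled the same way. The one real divergence is case \emph{ii)}: the paper disposes of it in a single stroke, applying Lemma~\ref{a} to replace the processes of $\pp$ and $\q$ (which do not occur in $\G$) by their reducts, with no coinduction at all, whereas you thread case \emph{ii)} through the coinduction and reserve Lemma~\ref{a} for the cross-boundary subcase of \rn{\sw}. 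Your variant is sound --- the side condition of \rn{Comm} is indeed recovered from Lemma~\ref{lem:triv} as you say --- just longer than the paper's.

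The last paragraph of your proposal, however, rests on a false claim: the LTS on global types does \emph{not} preserve boundedness. Take
\Cline{\G=\pr\to\ps{:}\set{\mu_1.\,\pp\to\q{:}\lambda.\,\pp\to\q{:}\lambda',\ \ \mu_2.\,\pp\to\q{:}\lambda}.}
This $\G$ is bounded (every participant occurs in every path of every subterm within depth $2$), and Rule \rn{Icomm} gives $\G\stackred{\pp\lambda\q}\G'$ with $\G'=\pr\to\ps{:}\set{\mu_1.\,\pp\to\q{:}\lambda',\ \mu_2.\,\End}$, which is unbounded: $\pp\in\plays{\G'}$ but $\pp$ does not occur in the path through $\mu_2$, so $\depth{\G'}\pp=\infty$. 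Your justification (``a transition removes a single interaction uniformly from all paths and hence cannot increase any participant's depth'') breaks exactly here: consuming the $\pp$--$\q$ exchange may erase $\pp$ from some branches and not from others, which sends its depth to infinity instead of decreasing it.

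Moreover, the defect is not repaired by adding typability of the source as a hypothesis, so no preservation lemma of this kind can close the step. The above $\G$ types $\Nt\equiv\pP\pr{\ps!\set{\mu_1,\mu_2}}\parN\pP\ps{\pr?\set{\mu_1,\mu_2}}\parN\pP\pp{\q!\lambda.\q!\lambda'}\parN\pP\q{\pp?\lambda.\pp?\lambda'}$ with $\participants=\set{\pp,\q}$: in branch $\mu_1$ both communications of $\pp$ and $\q$ are tracked by $\G_1$, while in branch $\mu_2$ only the first is tracked and the residue $\pP\pp{\q!\lambda'}\parN\pP\q{\pp?\lambda'}$ is ignored via Axiom \rn{\End} and Rule \rn{\sw}; all side conditions of \rn{Comm}, including boundedness of $\G$, are met. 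Now $\Nt\stackred{\pp\lambda\q}\Nt'$ and $\set{\pp,\q}\subseteq\plays\G$ put us in case \emph{i)}, yet the unique $\G'$ reachable from $\G$ with that label is the unbounded type above, and an unbounded type different from $\End$ cannot appear in any derivable judgment, since after finitely many applications of \rn{\sw} one must reach an application of \rn{Comm}, whose side condition it violates. So the point you flagged as ``the real obstacle'' is indeed the crux, and it is also the point the paper's own proof passes over in silence: the paper concludes case \emph{i)} by ``using Rule \rn{Comm}'' on the rebuilt type without discharging its boundedness side condition. Closing this gap needs more than your lemma --- e.g. an extra invariant on derivations such as $\plays\G\cap\participants=\emptyset$, maintained throughout the reconstruction, which the rules as stated do not enforce.
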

\begin{proof}
From $\Nt  \stackred{\pp \lambda \q} \Nt'$ we get $\Nt\equiv\pP\pp  {\Pou\q\lambda\PP i I} \parN\pP \q {\Pin\pp\lambda\Q j J }\parN\Nt_0$  with $I\subseteq J$ 
and \linebreak $\Nt'\equiv\pP\pp  {\PP_l} \parN\pP \q {\Q_l}\parN\Nt_0$ and $\lambda=\lambda_l$ for some $l\in I$.  Note  that Lemma~\ref{d} implies either $\set{\pp,\q}\subseteq\plays\G$ or 
 $\pp,\q\not\in\plays\G$.\\
{\bf \ref{SR2}}. 
 In this case $\tyng\G\Nt\participants$ implies $\tyng\G{\pP\pp  {\PP_l} \parN\pP \q {\Q_l}\parN\Nt_0}{\participants'}$ with $\participants'\subseteq\participants$ by Lemma~\ref{a}.\\
{\bf \ref{SR1}}. 
The proof is by  coinduction on $\tyng\G{\pP\pp\PP\parN\Nt}\participants$ and by cases on the last applied rule.  

{\em Rule} \rn{Comm}. We get $\G=\Gc\pr\ps{\lambda'}\G h H$ and  $\Nt\equiv\pP\pr  {\Pou\ps{\lambda'}\R h H} \parN\pP \ps {\Pin\pr{\lambda'}S k K }\parN\Nt_1$  and $H\subseteq K$ and $\tyng{\G_h}{\pP\pr{\R_h}\parN\pP\ps{S_h}\parN\Nt_1}{\participants_h}$ for all $h \in H$ with $\participants=\bigcup_{h\in H}\participants_h$. If with $\pp=\pr$ and $\q=\ps$, then $I=H$, $J=K$ and $\lambda_i=\lambda_i'$ for all $i\in I$. We conclude  $\G \stackred{\pp \lambda \q}\G_l$ and $\tyng{\G_l}{\Nt'}{\participants_l}$. Otherwise 
$\set{\pp,\q}\cap\set{\pr,\ps}=\emptyset$, which
implies  $
\pP\pr  {\R_h} \parN\pP \ps {S_h }\parN\Nt_1\stackred{\pp \lambda \q} \pP\pr  {\R_h} \parN\pP \ps {S_h }\parN\Nt'_1$ for all $h \in H$ by Fact~\ref{fact:arb}. 
Moreover $\set{\pp,\q}\subseteq\plays{\G_h}$ for all $h \in H$ since $\G$ is bounded. 
By  coinduction  we get $\G_h\stackred{\pp \lambda \q}\G_h'$ and $ \tyng{\G'_h}{\pP\pr  {\R_h} \parN\pP \ps {S_h }\parN\Nt'_1}{\participants'_h}$  for some $\G'_h$  and  ${\participants'_h}\subseteq{\participants_h}$ and  for all $h\in H$. 
We conclude $\G\stackred{\pp \lambda \q}\Gc\pr\ps{\lambda'}{\G'} h H$ using Rule \rn{Icomm} and $ \tyng{\Gc\pr\ps{\lambda'}{\G'} h H}{\Nt'}{\participants'}$ with $\participants'=\bigcup_{h\in H}\participants_h'$ using Rule \rn{Comm}.

{\em Rule} \rn{\sw}. In this case $\Nt\equiv\Nt_1\parN\Nt_2$ and $\participants=\participants_1\cup\participants_2$ and $\tyng\G{\Nt_1}{\participants_1}$. From $\set{\pp,\q}\subseteq\plays\G$,  by Lemma~\ref{lem:triv}  we get $\set{\pp,\q}\subseteq\plays{\Nt_1}$ which implies $\Nt_1  \stackred{\pp \lambda \q} \Nt'_1$. 
By coinduction we get  $\G\stackred{\pp \lambda \q}\G'$ and $\tyng{\G'}{\Nt_1'}{\participants_1'}$ with ${\participants'_1}\subseteq{\participants_1}$. We  conclude using Rule \rn{\sw}, since by construction $\Nt'\equiv\Nt_1'\parN\Nt_2$.\,\qed\linebreak
\end{proof}


 We   note  that Subject Reduction, as formulated in previous theorem, fails if we allow unbounded global types.  
 Let $\G=\pp\to\q{:}\set{\lambda_1.\pr\to\ps{:}\lambda,\lambda_2.\G}$ and $\Nt\equiv\pP\pp\PP\parN\pP\q\Q\parN\pP\pr{\ps!\lambda}\parN\pP\ps{\pr?\lambda}$ and  
 $\PP=\q!\Set{\lambda_1,\,\lambda_2.\PP}$ and 
 $\Q=\pp?\Set{\lambda_1,\,\lambda_2.\Q}$.  
 Then we have  $\tyng\G\Nt\emptyset$ and $\Nt\stackred{\pr \lambda \ps}\pP\pp\PP\parN\pP\q\Q$, but there is no transition labelled $\pr \lambda \ps$ starting from $\G$.  Note  that the session $\Nt$ can be typed,  still with the $\emptyset$ subscript,  by the bounded global type $\G'=\pr\to\ps{:}\lambda.\pp\to\q{:}\set{\lambda_1,\lambda_2.\G'}$.

\smallskip

Session  Fidelity  ensures that the communications in a session typed by a global type proceed   
 at least as prescribed by the global  type.

\begin{theorem}[Session Fidelity]\label{SF}
Let $\tyng\G\Nt\participants$ and
$\G \stackred{\pp \lambda \q} \G'$.
 Then  $\Nt \stackred{\pp \lambda \q} \Nt'$ and $\tyng{\G'}{\Nt'}{\participants'}$ with $\participants'\subseteq\participants$. 
\end{theorem}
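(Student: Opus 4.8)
The plan is to mirror the proof of Theorem~\ref{SR}, this time driving the argument from the global type: I would proceed by coinduction on the derivation of $\tyng\G\Nt\participants$ and by cases on its last rule, inverting in parallel the derivation of $\G\stackred{\pp\lambda\q}\G'$ through the global LTS of Definition~\ref{ltsgt}. A preliminary observation is that any transition $\G\stackred{\pp\lambda\q}\G'$ has $\pp,\q\in\plays\G$, so by Lemma~\ref{lem:triv} both participants occur in $\Nt$; this is what makes the matching session transition available.

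The central case is Rule \rn{Comm}, with $\G=\Gc\pr\ps{\lambda'}\G h H$ and $\Nt\equiv\pP\pr{\Pou\ps{\lambda'}\R h H}\parN\pP\ps{\Pin\pr{\lambda'}S k K}\parN\Nt_1$, where $H\subseteq K$ and $\tyng{\G_h}{\pP\pr{\R_h}\parN\pP\ps{S_h}\parN\Nt_1}{\participants_h}$ for all $h\in H$, with $\participants=\bigcup_{h\in H}\participants_h$. Inverting $\G\stackred{\pp\lambda\q}\G'$ leaves two possibilities. If it is an instance of \rn{Ecomm}, then $\pp=\pr$, $\q=\ps$ and $\G'=\G_j$ for some $j\in H$; Rule \rn{Comm-T} fires the very same communication in $\Nt$ (the inclusion $H\subseteq K$ guaranteeing the receiver offers $\lambda'_j$), producing $\Nt'\equiv\pP\pr{\R_j}\parN\pP\ps{S_j}\parN\Nt_1$, which is already typed by $\G_j$ with $\participants_j\subseteq\participants$ among the premises. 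If it is an instance of \rn{Icomm}, then $\set{\pp,\q}\cap\set{\pr,\ps}=\emptyset$, $\G_h\stackred{\pp\lambda\q}\G'_h$ for every $h\in H$, and $\G'=\Gc\pr\ps{\lambda'}{\G'} h H$.

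This \rn{Icomm} subcase is where I expect the main obstacle. The coinductive hypothesis applied to each premise gives $\pP\pr{\R_h}\parN\pP\ps{S_h}\parN\Nt_1\stackred{\pp\lambda\q}\Nt'_h$ together with $\tyng{\G'_h}{\Nt'_h}{\participants'_h}$ and $\participants'_h\subseteq\participants_h$. The delicate point is that these branchwise transitions must be \emph{uniform}: since the fired communication touches neither $\pr$ nor $\ps$, Fact~\ref{fact:arb} forces it to take place entirely within $\Nt_1$ and independently of $\R_h,S_h$, so that $\Nt_1\stackred{\pp\lambda\q}\Nt'_1$ for a single $\Nt'_1$ and $\Nt'_h\equiv\pP\pr{\R_h}\parN\pP\ps{S_h}\parN\Nt'_1$. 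I can then raise the transition to the whole session, $\Nt\stackred{\pp\lambda\q}\Nt'$ with $\Nt'\equiv\pP\pr{\Pou\ps{\lambda'}\R h H}\parN\pP\ps{\Pin\pr{\lambda'}S k K}\parN\Nt'_1$, and re-apply Rule \rn{Comm} to the family $\tyng{\G'_h}{\pP\pr{\R_h}\parN\pP\ps{S_h}\parN\Nt'_1}{\participants'_h}$ (its side condition on participant sets being re-established via Lemma~\ref{lem:triv}, and boundedness of $\G'$ following from that of $\G$), obtaining $\tyng{\G'}{\Nt'}{\participants'}$ with $\participants'=\bigcup_{h\in H}\participants'_h\subseteq\participants$.

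Finally, Rule \rn{\sw} gives $\Nt\equiv\Nt_1\parN\Nt_2$ with $\tyng\G{\Nt_1}{\participants_1}$, $\participants_2=\plays{\Nt_2}\neq\emptyset$ and $\participants=\participants_1\cup\participants_2$. Since $\pp,\q\in\plays\G\subseteq\plays{\Nt_1}$ by Lemma~\ref{lem:triv}, the communication lives in $\Nt_1$; the coinductive hypothesis yields $\Nt_1\stackred{\pp\lambda\q}\Nt'_1$ and $\tyng{\G'}{\Nt'_1}{\participants'_1}$ with $\participants'_1\subseteq\participants_1$. This lifts to $\Nt\stackred{\pp\lambda\q}\Nt'_1\parN\Nt_2$, and a final use of Rule \rn{\sw} (with $\Nt_2$, hence $\participants_2$, unchanged) re-derives $\tyng{\G'}{\Nt'_1\parN\Nt_2}{\participants'_1\cup\participants_2}$ with $\participants'_1\cup\participants_2\subseteq\participants$, which concludes the argument.
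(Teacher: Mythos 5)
Your proposal follows the paper's proof almost line by line: coinduction on the typing derivation, the \rn{Comm} case split according to whether the global step is \rn{Ecomm} or \rn{Icomm}, the same uniformity argument showing that the branchwise reductions leave $\pr,\ps$ untouched and therefore all live inside $\Nt_1$ (so $\Nt'_h\equiv\pP\pr{\R_h}\parN\pP\ps{S_h}\parN\Nt'_1$ for a single $\Nt'_1$), and the same \rn{\sw} case. The one point where you genuinely deviate is the justification of the recursive appeal in the \rn{Icomm} subcase, and this is a real gap. The conclusion of the theorem is not purely coinductive: it asserts the \emph{existence} of a session step $\Nt\stackred{\pp\lambda\q}\Nt'$, a finite fact established by one application of \rn{Comm-T}. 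In the coinductive style the paper adopts, a coinductive hypothesis may only be used, guardedly, to build the coinductive object (here the typing derivation of $\tyng{\G'}{\Nt'}{\participants'}$); it cannot by itself deliver the existence of a reduction, since typing derivations may be infinite and an unrestricted appeal to the hypothesis could regress forever without ever producing a step. The paper avoids this by nesting, inside the \rn{Comm} case, an \emph{induction on the number $t$ of rules} used to derive $\G\stackred{\pp\lambda\q}\G'$: your \rn{Ecomm} subcase is its base $t=1$, and in the \rn{Icomm} subcase the appeal to the premises is an application of the induction hypothesis at $t-1$, not of the coinductive hypothesis. Your argument is repaired exactly by making this measure explicit; note that your own recursion does decrease it (each pass through \rn{Icomm} strips one rule off the transition derivation, and consecutive \rn{\sw} steps are finitely many thanks to the side condition $\participants_2=\plays{\Nt_2}\neq\emptyset$), but as written, ``the coinductive hypothesis applied to each premise gives $\pP\pr{\R_h}\parN\pP\ps{S_h}\parN\Nt_1\stackred{\pp\lambda\q}\Nt'_h$'' is not a licensed step.

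A secondary remark: when re-applying Rule \rn{Comm} you assert that boundedness of $\G'$ ``follows from that of $\G$''. This is not immediate: a step $\G\stackred{\pp\lambda\q}\G'$ consumes an occurrence of $\pp\lambda\q$ in \emph{every} branch, and may leave $\pp$ occurring in some branches of $\G'$ but in no path of others, which is precisely what boundedness forbids. The paper's own proof is silent on this side condition as well, so this does not distinguish your proof from the paper's; but it should not be presented as an automatic consequence of the boundedness of $\G$.
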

\begin{proof} The proof is by coinduction on the derivation of $\tyng\G\Nt\participants$ and by cases on the last applied rule.

{\em Rule} \rn{Comm}.
The proof is by induction on  the number  $t$  of 
transition  rules used to derive $\G \stackred{\pp \lambda \q} \G'$.\\
{\em Case $t=1$.} Then $\G \stackred{\pp \lambda \q} \G'$ is  the Axiom 
\rn{Ecomm}  and   $\G=\Gc\pp\q\lambda\G i I$, where 
$\lambda=\lambda_l$ and $\G'=\G_l$ with $l\in I$.
We get
 $\Nt\equiv\pP\pp  {\Pou\q\lambda\PP i I} \parN\pP \q {\Pin\pp\lambda\Q j J }\parN\Nt_0$ 
 with $I\subseteq J$ and $\tyng{\G_i}{\pP\pp{\PP_i}\parN\pP\q{\Q_i}\parN\Nt_0}{\participants_i}$ for all $i \in I$ with $\participants=\bigcup_{i\in I}\participants_i$.
   Then  we conclude 
 $\Nt\stackred{\pp \lambda \q} \pP\pp{\PP_l}\parN\pP\q{\Q_l}\parN\Nt_0$  by Rule \rn{Comm-T}  and  $ \tyng{\G_l}{\pP\pp{\PP_l}\parN\pP\q{\Q_l}\parN\Nt_0}{\participants_l}$.\\ 
 {\em Case $t>1$.} Then $\G \stackred{\pp \lambda \q} \G'$ is the conclusion of  Rule \rn{Icomm}. Moreover 
$\G=\Gc\pr\ps{\lambda'}\G h H$ and $\G'=\Gc\pr\ps{\lambda'}{\G'} h H$ and  $\G_h\stackred{\pp \lambda \q}\G_h'$ for all $h \in H$ and $\set{\pp,\q}\cap\set{\pr,\ps}=\emptyset$. We get

 \Cline{\Nt\equiv\pP\pr  {\Pou\ps{\lambda'}\R h H} \parN\pP \ps {\Pin\pr{\lambda'}S k K}\parN\Nt_1} \noindent with $H\subseteq K\text{ and }\tyng{\G_h}{\pP\pr{\R_h}\parN\pP\ps{S_h}\parN\Nt_1}{\participants_h}$ for all $h \in H$ with $\participants=\bigcup_{h\in H}\participants_h$.
 
 \noindent
By induction  $\pP\pr  {\R_h} \parN\pP \ps {S_h }\parN\Nt_1\stackred{\pp \lambda \q} \Nt'_h$ and $\tyng{\G'_h}{\Nt'_h}{\participants'_h}$ with ${\participants_h'}\subseteq{\participants_h}$ for all $h\in H$. 
The condition $\set{\pp,\q}\cap\set{\pr,\ps}=\emptyset$ ensures that the reduction  $\pP\pr  {\R_h} \parN\pP \ps {S_h }\parN\Nt_1\stackred{\pp \lambda \q} \Nt'_h$ does not modify the processes of participants $\pr$ and $\ps$. Moreover the processes of participants $\pp$ and $\q$ are the same in  $ \Nt'_h$ for all $h\in H$. This implies  $\Nt'_h\equiv\pP\pr{\R_h}\parN\pP\ps{S_h}\parN\Nt''$ for all $h\in H$ and some $\Nt''$. 
We conclude $\Nt\stackred{\pp \lambda \q}\Nt'$  where $\Nt'=\pP\pr  {\Pou\ps{\lambda'}\R h H} \parN\pP \ps {\Pin\pr{\lambda'}S k K }\parN\Nt''$  using Rule \rn{Comm-T} and $ \tyng{\G'}{\Nt'}{\participants'}$ with $\participants'=\bigcup_{h\in H}\participants_h'$ using Rule \rn{Comm}.

{\em Rule} \rn{\sw}. In this case $\Nt\equiv\Nt_1\parN\Nt_2$ and $\participants=\participants_1\cup\participants_2$ and $\tyng\G{\Nt_1}{\participants_1}$. By coinduction $\Nt_1 \stackred{\pp \lambda \q} \Nt'_1$ 
and $\tyng{\G'}{\Nt'_1}{\participants_1'}$ with $\participants_1'\subseteq\participants_1$. Then $\Nt_1 \parN\Nt_2\stackred{\pp \lambda \q} \Nt'_1\parN\Nt_2$ and $\tyng{\G'}{\Nt'_1\parN\Nt_2}{\participants_1'\cup\participants_2}$ using Rule \rn{\sw}.
\qed
\end{proof}

\medskip

 We  can show that  typability ensures  $\participants$-excluded Lock-freedom.
This
follows from   Subject Reduction and  Session Fidelity  thanks to the boundedness condition.

\begin{theorem}[$\participants$-excluded Lock-freedom]\label{LF}
If $\tyng\G\Nt\participants$,  
  then $\Nt$ is $\participants$-excluded lock-free.
\end{theorem}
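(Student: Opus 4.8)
The plan is to derive $\participants$-excluded lock-freedom entirely from the two properties already at hand, Subject Reduction (Theorem~\ref{SR}) and Session Fidelity (Theorem~\ref{SF}), with boundedness supplying the key finiteness. Unfolding Definition~\ref{d:lf}, I must show that whenever $\tyng\G\Nt\participants$, $\Nt\stackred{\mypath}\Nt'$ and $\pp\in\plays{\Nt'}\setminus\participants$, the session $\Nt'$ can move along some trace whose last label involves $\pp$.

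First I would show that typing survives the reduction $\Nt\stackred{\mypath}\Nt'$. As remarked just after Lemma~\ref{d}, that lemma forces the two participants of any label either both to occur in the global type or both to lie outside it; hence every single transition of a typed session falls under case~\ref{SR1} or case~\ref{SR2} of Subject Reduction. Iterating this over the finitely many labels of $\mypath$, from $\tyng\G\Nt\participants$ I obtain $\tyng{\G'}{\Nt'}{\participants'}$ for some $\G'$ and some $\participants'\subseteq\participants$, where $\G'$ is again bounded (the global type of any derivable judgment is bounded, which is immediate from the side condition of Rule \rn{Comm} together with the shape of Rules \rn{\End} and \rn{\sw}). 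Since $\participants'\subseteq\participants$, the hypothesis $\pp\notin\participants$ gives $\pp\notin\participants'$, and Lemma~\ref{lem:triv} applied to $\tyng{\G'}{\Nt'}{\participants'}$ yields $\pp\in\plays{\G'}$. It therefore suffices to prove the self-contained claim: if $\tyng{\G'}{\Nt'}{\participants'}$ with $\G'$ bounded and $\pp\in\plays{\G'}$, then $\Nt'$ admits a trace whose last label involves $\pp$.

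For this claim I would use boundedness directly. Since $\pp\in\plays{\G'}$ and $\G'$ is bounded, $\depth{\G'}{\pp}$ is finite, so along any path $\mypath\in\paths{\G'}$ the participant $\pp$ first occurs at a finite position $n=\depth{\mypath}{\pp}\le\depth{\G'}{\pp}$, and in particular $\pp\in\plays{\mypath[n]}$. The prefix $\mypath[1],\dots,\mypath[n]$ is realised in the global LTS by $n$ applications of Axiom \rn{Ecomm}: by the definition of $\paths{\cdot}$ each $\mypath[k]$ is exactly the head communication of the residual type obtained after firing $\mypath[1],\dots,\mypath[k-1]$, so $\G'\stackred{\mypath[1]}\cdots\stackred{\mypath[n]}\G'_n$. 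Feeding these steps one at a time to Session Fidelity, starting from $\tyng{\G'}{\Nt'}{\participants'}$, matches each global step $\stackred{\mypath[k]}$ by a session step $\stackred{\mypath[k]}$ on a well-typed residual, giving $\Nt'\stackred{\mypath[1]}\cdots\stackred{\mypath[n]}\Nt'_n$. Taking the prefix $\mypath'=\mypath[1]\cdots\mypath[n-1]$ and $\Lambda=\mypath[n]$, the last label $\Lambda$ satisfies $\pp\in\plays{\Lambda}$, which is precisely what Definition~\ref{d:lf} demands.

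The routine part is the finite iteration of Theorems~\ref{SR} and~\ref{SF} and the nesting of the inclusions $\participants'\subseteq\participants$. The genuine content, and the step I expect to be delicate, is the use of boundedness to upgrade ``$\pp$ occurs somewhere in $\G'$'' to ``$\pp$ occurs at a uniformly bounded depth along the chosen path'': without boundedness a retained participant could be deferred indefinitely along some branch, and the \rn{Ecomm}-driven prefix would never expose a communication of $\pp$. Pinning down the exact correspondence between a path prefix of $\paths{\G'}$ and the matching sequence of \rn{Ecomm} transitions, and checking that the head of each residual type is indeed the next label of the path, is where care is needed.
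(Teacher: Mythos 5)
Your proposal is correct and takes essentially the same route as the paper's proof: Subject Reduction reduces the problem to an arbitrary well-typed session, Lemma~\ref{lem:triv} gives $\pp\in\plays\G$, boundedness makes $\depth\G\pp$ finite, a finite sequence of \rn{Ecomm} steps yields a global trace ending in a label involving $\pp$, and (iterated) Session Fidelity transfers that trace to the session. The only cosmetic difference is that the paper organises the global-level step as an induction on $d=\depth\G\pp$ (branching over all $i\in I$ at each stage), whereas you follow a single arbitrary path of $\G'$ up to the first occurrence of $\pp$; these are two packagings of the same argument.
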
 
\begin{proof} 
 By Subject Reduction it is enough to prove that  in well-typed sessions
no active participant  not belonging to $\participants$  is  prevented to progress.  
So, let $\pp\in\plays\Nt$ such that $\pp\not\in\participants$.
 By Lemma~\ref{lem:triv}  we have that  $\pp\in\plays\Nt$ and $\pp\not\in\participants$ imply  $\pp\in\plays\G$.
 We proceed now by induction on $d=\depth\G\pp$.\\
 If $d=1$ then either $\G=\Gc\pp\q\lambda\G i I$ or
 $\G=\Gc\q\pp\lambda\G i I$ and $\G\stackred{\Lambda} \G'$ with $\pp\in\plays\Lambda$ by Axiom \rn{Ecomm}. 
 Then $\Nt \stackred{\Lambda}\Nt'$ by Theorem~\ref{SF}.\\
 If $d>1$ then $\G=\Gc\q\pr\lambda\G i I$ with $\pp\not\in\set{\q,\pr}$
 and $\G \stackred{\q \lambda_i \pr} \G_i$ for all $i\in I$ by
  Axiom \rn{Ecomm}. 
 Induction
 applies since $\depth\G\pp>\depth{\G_i}\pp$  for all $i\in I$.  Then,  for all $i\in I$, we get  $\G_i\stackred{\mypath_i\cdot\Lambda_i}\G_i'$  for some $\mypath_i$, $\Lambda_i$ with $\pp\in\plays{\Lambda_i}$. 
 This implies $\G\stackred{\sigma_i'}\G_i'$ where $\sigma_i'=\q \lambda_i \pr\cdot\mypath_i\cdot\Lambda_i$ for all $i\in I$. We conclude $\Nt\stackred{\sigma_i'}\Nt_i'$ for all $i\in I$ by Theorem~\ref{SF}.  
 \qed
 \end{proof}
 
 \medskip
  The following example shows that partial typing allows to type sessions which require unbounded global types in standard type systems~\cite{Honda2016}. 
 
 \begin{example}[Buyer-Seller-Carrier]
{\em
Let us consider the following session  (from~\cite[Sect.1]{PD22})

\Cline{
\Nt = \pP\pb{\PB}\parN \pP\ps{\PS}\parN \pP\pc{\ps?\msg{ship}}
}

\noindent
where $\PB=\ps!\Set{\msg{add}.\PB,\, \msg{pay}}$ and 
$\PS=\pb?\Set{\msg{add}.\PS,\, \msg{pay}.\pc!\msg{ship} }$

Such a session implements a system where
a $\pb$uyer can keep on adding goods - sold by a $\ps$eller -  in his shopping cart an unbounded number of times, until he decides to  pay for  
the shopping cart's content. 
In the latter case, the $\ps$eller informs the $\pc$arrier for the shipment.
Session $\Nt$ is obviously non lock-free, since participant $\pc$ would not be able to progress in
case $\ps$ be a seriously disturbed shopaholic who never  stops 
adding goods in his cart.
In fact $\Nt$ cannot be typed in $\tyng{}{}{\emptyset}$ (which corresponds to the type system of
\cite{BDL22}).

In this scenario participant $\pb$ is a client, whereas  $\ps$ and  $\pc$ are part of the service
used by $\pb$. It is hence natural to look at  $\Nt$ with a bias towards the client, the one whose
good property have to be  ensured. 
As a matter of fact it is possible to  ensure the  Lock-freedom  of  $\pb$, namely the 
$\Set{\ps,\pc}$-excluded lock freedom of $\Nt$, by  deriving 
$\tyng{\G}{\Nt}{\Set{\ps,\pc}}$,
where $\G = \pb\to\ps{:}\Set{\msg{add}.\,\G,\, \msg{pay}
 }$, as follows 


\Cline{\mbox{\Large $\DD$} \quad=\quad
                  \prooftree
                            \mbox{\Large $\DD$}
                         \qquad\qquad\qquad
                         \prooftree
                                                 \prooftree
                                                 \Justifies
                                                     \tyng{\End}{\pP\pb{\inact}}{\emptyset}
                                                 \using \rn{$\End$}
                                                 \endprooftree
                                           \Justifies
                                                   \tyng{\End}{\pP\pb\inact\parN\pP\ps{\pc!\msg{ship}}\parN \pP\pc{\ps?\msg{ship}}}{\Set{\ps,\pc}}
                                                    \using \rn{Weak}  
                                           \endprooftree  
                  \Justifies
                     \tyng{\G}{\pP\pb{\ps!\Set{\msg{add}.\PB,\, \msg{pay}}}\parN \pP\ps{\pb?\Set{\msg{add}.\PS,\, \msg{pay}.\pc!\msg{ship} }}\parN \pP\pc{\ps?\msg{ship}}}{\Set{\ps,\pc}}
                     \using \rn{Comm}
                  \endprooftree}
\medskip
\noindent 
 Note that $\G$  is a bounded global type. 
%
%
\finex
}
 \end{example}

\section{Type Inference}\label{ti}

 In  our type system each session can be trivially typed by the $\End$ type just applying  Axiom \rn{$\End$}  and Rule \rn{\sw}:

\Cline{\tyng\End\Nt{\plays\Nt}}

\noindent
Clearly, this typing does not provide any information on $\Nt$. 
We are interested in more informative typings,  i.e. with minimum sets of ignored participants,   if any. In this  section, we will describe an algorithm to infer global types  and sets of participants  from  sessions, 
proving  also its soundness and completeness  with respect to  our type system. 
 In particular,  the algorithm   applied to a  session 
$\Nt$  returns  
all and only those global types which can be  assigned to $\Nt$ with derivations indexed by  suitable 
sets of participants.   
 Note that, since  derivations indexed by the same or different sets of participants can assign different global types to a  session,  
 the algorithm needs to be non-deterministic in order to be complete. 

The first step towards defining  such an algorithm is the introduction of a finite representation for global types.
Since global types are regular terms,  they can be represented,  by results in~\cite{AdamekMV06,Cour83}, as finite systems of regular syntactic equations  
formally defined below. 

We begin by defining a \emph{global type pattern} as
a finite term generated by the following grammar: 

\Cline{
\Gpat ::= \End \mid \Gc\pp\q\lambda\Gpat i I \mid \X
}

\noindent
where $\X$ is a  type  variable taken from a countably infinite set. 
  We   denote by $\vars\Gpat$ the set of  type variables occurring in $\Gpat$. 
  
 We also need to compute sets of participants, so we define {\em  p-set patterns}  by:

\Cline{
\Ppat ::= \participants \mid \x \mid \Ppat\cup\Ppat
}

\noindent
where $\x$ is a  p-set  variable taken from a countably infinite set and $\participants$ can be any finite set of participants.
We   denote by $\vars\Ppat$ the set of   p-set     variables occurring in $\Ppat$. 

We use $\chi$ to range over type and p-set variables.  

A \emph{substitution} $\theta$ is a finite partial map from  type variables to global types and from p-set variables to sets of participants.   
  We   denote 
by $\theta \sbtplus \sigma$ the union of two substitutions such that  $\theta(\chi) = \sigma(\chi)$, for all $\chi \in \dom\theta\cap\dom\sigma$, 
and by $\Gpat\theta$ (resp. $\Ppat\theta$) the application of $\theta$ to $\Gpat$ (resp. $\Ppat$).
   We define   $\theta \sbtord \sigma$ if $\dom\theta\subseteq\dom\sigma$ and 
$\theta(\chi) = \sigma(\chi)$, for all $\chi \in \dom\theta$.  
 Note that, if $\vars\Gpat\subseteq\dom\theta$, then $\Gpat\theta$ is a global type and if $\vars\Ppat\subseteq\dom\theta$, then $\Ppat\theta$ is  a set of participants.

A \emph{type equation} has shape $\agteq\X\Gpat$ and a \emph{(regular) system of  type  equations} $\eqsys$ is a finite set of equations such that 
$\agteq\X{\Gpat_1}$ and $\agteq\X{\Gpat_2}\in\eqsys$ imply $\Gpat_1 = \Gpat_2$. 
We   denote  by $\dom\eqsys$ the set $\{\X\mid \agteq\X\Gpat \in \eqsys \}$ and by 
  $\vars\eqsys$ the set $\bigcup \{ \vars\Gpat\cup\{\X\} \mid \agteq\X\Gpat\in\eqsys\}$.
A \emph{solution} of a system $\eqsys$ is a substitution $\theta$ such that $\vars\eqsys \subseteq \dom\theta$ and, for all $\agteq\X\Gpat\in\eqsys$, $\theta(\X) = \Gpat\theta$ holds  and $\theta(\X)$ is bounded. 
  We   denote by $\gsol\eqsys$ the set of all solutions of $\eqsys$.

 A \emph{p-set equation} has shape $\agteq\x\Ppat$. We use $\eqsysP$ to range over  regular systems of p-set equations, which are defined similarly to  regular systems of type equations. 
Also  $\dom\eqsysP$, $\vars\eqsysP$ and $\gsol\eqsysP$ have the same meanings  as for systems of type equations. 

A \emph{p-condition} has shape $\pcon{\plays\X}{\x}\pp\q\participants$ and we let 
$\Pcon$ range over sets of p-conditions with pairwise distinct type and 
p-set variables. A substitution $\theta$ {\em agrees} with\\
 - $\pcon{\plays\X}{\x}\pp\q\participants$ if $\pcone{\plays{\theta(\X)}}{\theta(\x)}\pp\q\participants$;\\ 
- $\Pcon$, notation $\theta\varpropto\Pcon$, if $\theta$ agrees with all p-conditions in $\Pcon$.

We define $\gsol{\eqsys,\eqsysP,\Pcon}$ as the set of solutions of $\eqsys$ and $\eqsysP$ which agree with $\Pcon$, i.e. $\gsol{\eqsys,\eqsysP,\Pcon}=\set{\theta\in\gsol\eqsys\cap\gsol\eqsysP\mid\theta\varpropto\Pcon}$  and note that 
$\eqsys_1\subseteq\eqsys_2$, $\eqsysP_1\subseteq\eqsysP_2$, $\Pcon_1\subseteq\Pcon_2$ imply $\gsol{\eqsys_2,\eqsysP_2,\Pcon_2}\subseteq\gsol{\eqsys_1,\eqsysP_1,\Pcon_1}$.

The algorithm follows essentially the structure of coSLD resolution of coinductive logic programming \cite{Simon06,SimonBMG07,SimonMBG06,AnconaD15}, namely the extension of standard SLD resolution capable to deal with regular terms and coinductive predicates. 
A \emph{goal} is a  triple   $\pair{\Nt,\x}\X$   of a type variable $\X$, a  session  
$\Nt$ and a p-set variable $\x$. 
The algorithm takes a goal $\pair{\Nt,\x}\X$ as  input, and returns a system of type equations $\eqsys$ and a system of p-set equations $\eqsysP$ and a set of p-condition $\Pcon$. 
 A 
solution for the variable $\X$ in $\eqsys$ is a global type for the  session  
$\Nt$ in a derivation indexed by a solution for the variable $\x$ in $\eqsysP$ which satisfies the p-conditions
in $\Pcon$. 
The key idea, borrowed from coinductive logic programming, is to keep track of already encountered goals  in order to detect cycles  and so avoiding non-termination. 

The inference  judgements have  
the following shape: 
$\tyalg{\Goals}{\pair{\Nt,\x}\X}\eqsys{\eqsysP,\Pcon}$, where 
$\Goals$ is a set of goals, all with different variables  which are all  different from $\X$ and $\x$.  
Rules defining the inference algorithm are reported in Figure~\ref{fig:tyalg}.

\begin{figure}
\begin{math}
\begin{array}{c}
\NamedRule{\rn{\infn{End}}}{}{ \tyalg{\Goals}{\pair{\pP\pp\inact,\x}{\X}}{\{\agteq\X\End\}}{\{\agteq\x\emptyset\},\emptyset}}{}
\\[1.5ex]
\NamedRule{\rn{\infn{Cycle}}}{ }{ \tyalg{\Goals, \pair{\Nt,\y}\Y}{\pair{\Nt,\x}\X}{\{\agteq\X\Y\}}{\{\agteq\x\y\},\emptyset}}{}
\\[1.5ex]
\NamedRule{{\rn{\infn{Comm}}}}{
  \tyalg{ \Goals' }{\pair{\pP\pp{\PP_i}\parN\pP\q{\Q_i}\parN\Nt,\y_i}{\Y_i}}{\eqsys_i}{\eqsysP_i,\Pcon_i} \ \ \forall i \in I
}{ \tyalg{\Goals}{\pair{\pP{\pp}{\PP} \parN \pP{\q}{\Q} \parN\Nt,\x}{\X}}{\eqsys} {\eqsysP,\Pcon}}
{\begin{array}{l}
 \Goals'=\Goals, \pair{\pP{\pp}{\PP}\parN \pP{\q}{\Q} \parN \Nt, 
  x
 }{\X} \\
\PP = \Pou\q\lambda\PP i I \\
\Q = \Pin\pp\lambda\Q j J\qquad I\subseteq J\\
\Y_i,\y_i\  \mathrm{fresh}\ \forall i \in I \\
\eqsys = \{\agteq\X{\Gc\pp\q\lambda\Y i I}\}\cup\bigcup_{i \in I} \eqsys_i  \\
\eqsysP = \{\agteq\x{\bigcup_{i \in I}\y_i}\}\cup\bigcup_{i \in I} \eqsysP_i  \\
\Pcon=\set{\pcon{\plays{\Y_i}}{\y_i}\pp\q{\plays{\Nt}}\mid\forall i \in I}\\
~~~~~~~~\cup\bigcup_{i \in I} \Pcon_i 
\end{array}}
\\[1.5ex]
 \NamedRule{\rn{\infn{\sw}}}{\tyalg{\Goals,\pair{\Nt_1\parN\Nt_2,\x}{\X}}{\pair{\Nt_1,\y}{\Y}}{\eqsys_1}{\eqsysP_1,\Pcon}}{ \tyalg{\Goals}{\pair{\Nt_1\parN\Nt_2,\x}{\X}}{\eqsys}{\eqsysP,\Pcon}}{\Y,\y\  \mathrm{fresh}\\
 \participants=\plays{\Nt_2}\neq\emptyset\\
 \eqsys = \set{\agteq\X\Y}\cup \eqsys_1 \\
 \eqsysP = \set{\agteq\x{\y\cup\participants}}\cup \eqsysP_1 }
\end{array}
\end{math}
\caption{Rules of the inference algorithm.}
\label{fig:tyalg}
\end{figure}

For a terminated  session  the algorithm returns just the  two equations $\agteq\X\End$ and $\agteq\x\emptyset$ and the empty set of conditions   (Axiom \rn{\infn{End}}). 

In Rule \rn{\infn{Comm}} the algorithm nondeterministically
selects one of the matching pairs of processes:
$\PP = \Pou\q\lambda\PP i I\ \text{and}\
\Q = \Pin\pp\lambda\Q j J\text{, with}\ I\subseteq J$, i.e. two participants willing to communicate such
that the output process can freely choose the message.
The algorithm is then recursively applied, for each $i\in I$,  to the session 
where the processes of $\pp$ and $\q$ are, respectively, 
$\PP_i$ and $\Q_i$. 
In each call the goal $\pair{\pP{\pp}{\PP}\parN \pP{\q}{\Q} \parN \Nt, x }{\X}$ is added to the set of goals.
At the end of the recursive calls 
the algorithm collects all the resulting equations plus another  two for the current variables.
  Note  that variables for  the goals in the premises are fresh.
 This is important to ensure that the sets of equations $\eqsys$ and $\eqsysP$ in the conclusion are indeed regular systems of equations  (there is at most one equation for each variable). 
 The  new p-condition  ensures that
the resulting global type associated to
$\X$  and the resulting set of participants associated to
$\x$ satisfy  the conditions on participants
required by Rule \rn{Comm} in Definition~\ref{def:type-system}.

In Rule \rn{\infn{\sw}}  the algorithm nondeterministically partitions the input session into two subsessions $\Nt_1$ and $\Nt_2$ and then it is recursively called on the former. After the recursive call it simply adds the same participants to the session (together with their processes) and to the set of ignored participants  by means of the equation $\agteq\x{\y\cup\participants}$. 

 Finally,  Axiom  \rn{\infn{Cycle}} detects cycles: if the session in the current goal appears also in the set  $\Goals$, the algorithm can stop,
 returning just two equations unifying the type and p-set variables associated with the session together with the empty set of conditions. 
 
 \begin{example}[Inference for the social media] \em Figure~\ref{ifsm} gives a type inference where:\\
 - the processes 
 $\PP$, $\Q$, $\PU$, $\PP_1$, $\Q_1$, $\PU_1$, $\PP_2$, $\Q_2$ are defined as in Example~\ref{ex:typingwe} and  $\PU_2=\q!\msg{dnd}.\PU$,
 $\PU_3=\q!\msg{grtd}.\PU$;\\ 
 - the goals are
 
  \Cline{\Goals_1=\set{\triple\X{\pP\pp{\PP}\parN \pP\q{\PQ}\parN \pP\pu{\PU}}\x}\qquad 
 \Goals_2=\Goals_1\cup\set{\triple{\Y_1}{\pP\pp{\PP_1}\parN \pP\q{\PQ_1}\parN \pP\pu{\PU}}{\y_1}}}  \Cline{\Goals_3=\Goals_4=\Goals_2\cup\set{\triple{\Y_2}{\pP\pp{\pu?\Set{\msg{dnd}.\PP,\, \msg{grtd}.\PP_2}}\parN \pP\q{\PQ_1}\parN \pP\pu{\PU_1}}{\y_2}}}  \Cline{\Goals_5=\Goals_3\cup\set{\triple{\Y_3}{\pP\pp{\PP}\parN \pP\q{\PQ_1}\parN \pP\pu{\PU_2}}{\y_3}}\qquad 
 \Goals_6=\Goals_4\cup\set{\triple{\Y_4}{\pP\pp{\PP_2}\parN \pP\q{\PQ_1}\parN \pP\pu{\PU_3}}{\y_4}}}
  \Cline{\Goals_7=\Goals_6\cup\set{\triple{\Y_6}{\pP\pp{\PP_2}\parN \pP\q{\PQ_2}\parN \pP\pu{\PU}}{\y_6}}\qquad
 \Goals_8=\Goals_7\cup$ \linebreak $\set{\triple{\Y_7}{\pP\pp{\PP_2}\parN \pP\q{\PQ_2}}{\y_7}}}
 
 - the systems of type equations are  
 
  \Cline{\eqsys=\set{\agteq\X{\q\to\pp{:}\msg{hello}.\Y_1}}\cup\eqsys_1\qquad
 \eqsys_1=\set{\agteq{\Y_1}{\pp\to\pu{:}\msg{req}.\Y_2}}\cup\eqsys_2} 
  \Cline{\eqsys_2=\set{\agteq{\Y_2}{\pu\to\pp{:}\set{\msg{dnd}.\Y_3,\msg{grtd}.\Y_4}}}\cup\eqsys_3\cup\eqsys_4\qquad
 \eqsys_3=\set{\agteq{\Y_3}{\pu\to\q{:}\msg{dnd}.\Y_5}}\cup\eqsys_5}
  \Cline{\eqsys_4=$\linebreak $\set{\agteq{\Y_4}{\pu\to\q{:}\msg{grtd}.\Y_6}}\cup\eqsys_6\qquad\eqsys_5=\set{\agteq{\Y_5}\X}} 
  \Cline{\eqsys_6=\set{\agteq{\Y_6}{\Y_7}}\cup\eqsys_7\qquad
 \eqsys_7=\set{\agteq{\Y_7}{\pp\to\q{:}\msg{hello}.\Y_8}}\cup\eqsys_8\qquad\eqsys_8=\set{\Y_8=\End}}
 
  - the systems of p-set equations are  
  
   \Cline{\eqsysP=\set{\agteq\x{\y_1}}\cup\eqsysP_1\qquad\eqsysP_1=\set{\agteq{\y_1}{\y_2}}\cup\eqsysP_2\qquad
  \eqsysP_2=\set{\agteq{\y_2}{\y_3\cup\y_4}}\cup\eqsysP_3\cup\eqsysP_4} 
  \Cline{\eqsysP_3=\set{\agteq{\y_3}{\y_5}}\cup\eqsysP_5
   \qquad\eqsysP_4=\set{\agteq{\y_4}{\y_6}}\cup\eqsysP_6\qquad\eqsysP_5=\set{\agteq{\y_5}{\x}}}
   \Cline{\eqsysP_6=\set{\agteq{\y_6}{\y_7\cup\set\pu}}\cup\eqsysP_7\qquad\eqsysP_7=\set{\agteq{\y_7}{\y_8}}\cup\eqsysP_8\qquad\eqsysP_8=\set{\agteq{\y_8}{\emptyset}}}
   
  - the sets of p-conditions are 
  
   \Cline{\Pcon=\set{\agteq{(\plays{\Y_1}\cup\y_1)\setminus\set{\pp,\q}}{\set\pu}}\cup\Pcon_1\qquad
  \Pcon_1=\set{\agteq{(\plays{\Y_2}\cup\y_2)\setminus\set{\pp,\pu}}{\set\q}}\cup\Pcon_2}
   \Cline{\Pcon_2=\set{\agteq{(\plays{\Y_3}\cup\y_3)\setminus\set{\pu,\pp}}{\set\q}, \agteq{(\plays{\Y_4}\cup\y_4)\setminus\set{\pu,\pp}}{\set\q}}\cup\Pcon_3\cup\Pcon_4}
  \Cline{\Pcon_3=$\linebreak $\set{\agteq{(\plays{\Y_5}\cup\y_5)\setminus\set{\pu,\q}}{\set\pp}}\cup\Pcon_5\qquad
  \Pcon_4=\set{\agteq{(\plays{\Y_6}\cup\y_6)\setminus\set{\pu,\q}}{\set\pp}}\cup\Pcon_6\qquad\Pcon_5=\emptyset} 
   \Cline{\Pcon_6=\set{\agteq{\y_6}{\y_7\cup\set\pu}}\cup\Pcon_7\qquad
  \Pcon_7=\set{\agteq{(\plays{\Y_8}\cup\y_8)\setminus\set{\pp,\q}}{\emptyset}}\cup\Pcon_8\qquad\Pcon_8=\emptyset}
 
 \noindent
  One can easily verify that a solution of both systems of equations  $\eqsys$ and $\eqsysP$  satisfying the p-conditions   (i.e. which agrees with $\Pcon$)  is $\X=\G$ and $\x=\set\pu$, where $\G$ is the global type defined in Example~\ref{ex:gtex} and derived for this session in Figure~\ref{fsm}.\finex
 \end{example}
 
 \begin{figure}
$\prooftree
\prooftree
         \prooftree
                  \prooftree
                         \prooftree
                         \justifies
                            \tyalg{\Goals_5}{\triple{\Y_5}{\pP\pp{\PP}\parN \pP\q{\PQ}\parN \pP\pu{\PU}}{\y_5}}{\eqsys_5}{\eqsysP_5,\Pcon_5}
                            \endprooftree
                         \justifies
                              \tyalg{\Goals_3}{\triple{\Y_3}{\pP\pp{\PP}\parN \pP\q{\PQ_1}\parN \pP\pu{\PU_2}}{\y_3}}{\eqsys_3}{\eqsysP_3,\Pcon_3}
                         \endprooftree     
                         \qquad
                         \prooftree
                                 \prooftree
                                          \prooftree
                                                 \prooftree
                                                 \justifies
                                                      \tyalg{\Goals_8}{\triple{\Y_8}{\pP\pp{\inact}\parN \pP\q{\inact}}{\y_8}}{\eqsys_8}{\eqsysP_8,\Pcon_7}
                                                 \endprooftree
                                           \justifies
                                                    \tyalg{\Goals_7}{\triple{\Y_7}{\pP\pp{\PP_2}\parN \pP\q{\PQ_2}}{\y_7}}{\eqsys_7}{\eqsysP_7,\Pcon_6}
                                           \endprooftree  
                                 \justifies
                                      \tyalg{\Goals_6}{\triple{\Y_6}{\pP\pp{\PP_2}\parN \pP\q{\PQ_2}\parN \pP\pu{\PU}}{\y_6}}{\eqsys_6}{\eqsysP_6,\Pcon_6}
                                  \endprooftree   
                         \justifies
                              \tyalg{\Goals_4}{\triple{\Y_4}{\pP\pp{\PP_2}\parN \pP\q{\PQ_1}\parN \pP\pu{\PU_3}}{\y_4}}{\eqsys_4}{\eqsysP_4,\Pcon_4}
                          \endprooftree   
                  \justifies
                     \tyalg{\Goals_2}{\triple{\Y_2}{\pP\pp{\pu?\Set{\msg{dnd}.\PP,\, \msg{grtd}.\PP_2}}\parN \pP\q{\PQ_1}\parN \pP\pu{\PU_1}}{\y_2}}{\eqsys_2}{\eqsysP_2,\Pcon_2}
                  \endprooftree
         \justifies
             \tyalg{\Goals_1}{\triple{\Y_1}{\pP\pp{\PP_1}\parN \pP\q{\PQ_1}\parN \pP\pu{\PU}}{\y_1}}{\eqsys_1}{\eqsysP_1,\Pcon_1}
          \endprooftree
\justifies
          \tyalg{}{\triple\X{\pP\pp{\PP}\parN \pP\q{\PQ}\parN \pP\pu{\PU}}\x}{\eqsys}{\eqsysP,\Pcon}
\endprooftree
$
\caption{A type inference for  the social media.}\label{ifsm}
 \end{figure}

Let $\eqsys$,   $\eqsysP$ be two systems of type and p-set equations, $\Pcon$  a set of p-conditions and $\Goals$ a set of goals. 
A solution $\theta \in \gsol{\eqsys,\eqsysP,\Pcon}$   {\em agrees} with $\Goals$ if 
$\pair{\Nt,\x}\X\in\Goals$ implies $\plays{\theta(\X)}\cup\theta(\x) = \plays\Nt$ for all $\X\in\vars\eqsys$ and all $\x\in\vars\eqsysP$. 
We   denote by $\gsol[\Goals]{\eqsys,\eqsysP,\Pcon}$ the set of all solutions in $\gsol{\eqsys,\eqsysP,\Pcon}$  agreeing with $\Goals$. 
We say that a system of  type equations $\eqsys$ is {\em guarded} if 
  $\agteq\X\Y$ and $ \agteq\Y\Gpat$ in $\eqsys$ imply   that $\Gpat$ is not a variable. 
Moreover,  $\eqsys$ is \emph{$\Goals$-closed}  if it is guarded and 
$\dom\eqsys\cap\vars\Goals = \emptyset$ and $\vars\eqsys\setminus\dom\eqsys \subseteq\vars\Goals$. 
 We define similarly when a set of  p-set  equations $\eqsysP$ is guarded and $\Goals$-closed.  

Toward proving properties of the inference algorithm,   we check a couple of auxiliary lemmas.  As usual $\tyalg{\Goals}{\pair{\Nt,\x}\X}{\eqsys}{\eqsysP,\Pcon}$ means that this judgment 
 belongs to a derivation in the system of  Figure~\ref{fig:tyalg}  having a judgment with an empty sets of goals as conclusion (namely it represents the result of a recursive call during the execution of our
algorithm). 

\begin{lemma}\label{lem:vars-eq}
If $\tyalg{\Goals}{\pair{\Nt,\x}\X}{\eqsys}{\eqsysP,\Pcon}$, then 
$\eqsys$ and $\eqsysP$ are $\Goals$-closed. 
\end{lemma}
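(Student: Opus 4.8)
The plan is to argue by structural induction on the (necessarily finite) derivation of $\tyalg{\Goals}{\pair{\Nt,\x}\X}{\eqsys}{\eqsysP,\Pcon}$, with a case analysis on its last rule, verifying for each rule the three ingredients of $\Goals$-closedness separately for $\eqsys$ and for $\eqsysP$. The arguments for the type system and the p-set system are completely parallel, so I would spell out the type case and only remark on the p-set case. Throughout I would use two standing side conditions of the judgement: the head variables $\X,\x$ of the current goal are distinct from every variable occurring in $\Goals$, and all variables introduced by a rule ($\Y_i,\y_i$ in \rn{\infn{Comm}}, $\Y,\y$ in \rn{\infn{\sw}}) are globally fresh. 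Freshness is what guarantees that the domains of the sub-systems $\eqsys_i$ are pairwise disjoint and disjoint from the newly added head, so that $\eqsys$ is again a genuine regular system with at most one equation per variable.

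The two axioms are immediate. For \rn{\infn{End}} we have $\eqsys=\{\agteq\X\End\}$: it is vacuously guarded (no equation has a variable right-hand side), $\dom\eqsys=\{\X\}$ is disjoint from $\vars\Goals$ by the side condition, and $\vars\eqsys\setminus\dom\eqsys=\emptyset$. For \rn{\infn{Cycle}} we have $\eqsys=\{\agteq\X\Y\}$ with $\pair{\Nt,\y}\Y\in\Goals$: guardedness is again vacuous (there is no equation for $\Y$), $\dom\eqsys=\{\X\}$ is disjoint from $\vars\Goals$, and $\vars\eqsys\setminus\dom\eqsys=\{\Y\}\subseteq\vars\Goals$ precisely because the matched goal lies in $\Goals$. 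For \rn{\infn{Comm}} the head equation $\agteq\X{\Gc\pp\q\lambda\Y i I}$ has a non-variable right-hand side, so it can occur neither as the first nor as the second component of a forbidden pair $\agteq{\X'}{\Y'},\,\agteq{\Y'}\Gpat$; hence guardedness reduces, using freshness to exclude cross-branch chains, to guardedness of the $\eqsys_i$, which holds by the induction hypothesis applied with the enlarged goal set $\Goals'=\Goals,\pair{\cdots}\X$. The domain condition follows since $\dom\eqsys=\{\X\}\cup\bigcup_i\dom{\eqsys_i}$, with $\X\notin\vars\Goals$ and each $\dom{\eqsys_i}$ disjoint from $\vars{\Goals'}\supseteq\vars\Goals$ by induction. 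For the last condition, each fresh $\Y_i$ is the head of the $i$-th premise and therefore lies in $\dom{\eqsys_i}\subseteq\dom\eqsys$, so it does not escape; any other variable of $\vars\eqsys\setminus\dom\eqsys$ comes from some $\vars{\eqsys_i}\setminus\dom{\eqsys_i}\subseteq\vars{\Goals'}=\vars\Goals\cup\{\X,\x\}$, and being a type variable distinct from $\X$ it already belongs to $\vars\Goals$.

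The delicate case, and the one I expect to be the real obstacle, is \rn{\infn{\sw}}, where $\eqsys=\{\agteq\X\Y\}\cup\eqsys_1$ introduces an alias $\agteq\X\Y$. The domain and variable-inclusion conditions go through exactly as for \rn{\infn{Comm}} (with $\Goals''=\Goals,\pair{\Nt_1\parN\Nt_2,\x}\X$ and $\Y$ the head of the premise, so $\Y\in\dom{\eqsys_1}$). Guardedness is what must be handled with care, since an alias can in principle extend an existing alias chain. One half is clean: the new equation cannot be the second edge of a forbidden pair, because $\X$ never occurs on a right-hand side of $\eqsys_1$ — a \rn{\infn{Cycle}} inside the premise derivation could target the freshly added goal $\pair{\Nt_1\parN\Nt_2,\x}\X$ only if some session in that sub-derivation equalled $\Nt_1\parN\Nt_2$, which is impossible since every session there has participant set contained in $\plays{\Nt_1}\subsetneq\plays{\Nt_1\parN\Nt_2}$.

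The remaining half — that the new equation is not the first edge of a forbidden pair, i.e. that the defining right-hand side of $\Y$ in $\eqsys_1$ is not itself a variable — is the crux: it amounts to showing that the premise of \rn{\infn{\sw}} is resolved by \rn{\infn{Comm}} or \rn{\infn{End}} rather than by a further alias-producing rule (a stacked \rn{\infn{\sw}}, or a \rn{\infn{Cycle}} back to an older goal). This is where I would lean on the intended discipline of the algorithm, namely that a \rn{\infn{\sw}} step peels off the whole ignored part at once so that \rn{\infn{\sw}} is not immediately iterated, and I would turn this into an explicit invariant on the head equation carried through the induction. Once that invariant is in place the p-set component $\eqsysP$ is settled by the identical bookkeeping, completing the proof.
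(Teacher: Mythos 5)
Your route is the same as the paper's own: the paper's entire proof of this lemma is the single sentence ``by a straightforward induction on the derivation'', so your case analysis is a fleshed-out version of it, and the cases you close (\rn{\infn{End}}, \rn{\infn{Cycle}}, \rn{\infn{Comm}}, and the half of \rn{\infn{\sw}} showing that $\X$ never occurs on a right-hand side of $\eqsys_1$) are handled correctly. Note first that one of the two sub-cases you leave open yields to the very argument you already used: in any derivation whose root has an empty goal set, every goal in the current goal set carries a session whose participant set contains that of the current session (participant sets only shrink going from conclusion to premise, and the stored goals are snapshots of conclusions below). Hence no goal in $\Goals''=\Goals,\pair{\Nt_1\parN\Nt_2,\x}\X$ can have session $\Nt_1$, since each has participant set containing $\plays{\Nt_1\parN\Nt_2}\supsetneq\plays{\Nt_1}$; so \rn{\infn{Cycle}} can never resolve the premise of \rn{\infn{\sw}}, and this sub-case should not have been lumped into the crux.

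The remaining sub-case, a stacked \rn{\infn{\sw}}, is a genuine gap, and it cannot be closed the way you hope: the ``intended discipline'' that \rn{\infn{\sw}} peels off the whole ignored part at once is not a consequence of the rules of Figure~\ref{fig:tyalg}, so no invariant provable by your induction rules it out. Concretely, take $\Nt\equiv\pP{\pp}{\inact}\parN\pP{\q}{\Q}\parN\pP{\pr}{\R}$ with $\Q,\R\neq\inact$: apply \rn{\infn{\sw}} splitting off $\pP{\pr}{\R}$, apply \rn{\infn{\sw}} again splitting off $\pP{\q}{\Q}$, and close the goal on $\pP{\pp}{\inact}$ with \rn{\infn{End}}. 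This is a complete derivation from the empty goal set, and it returns $\eqsys=\set{\agteq\X\Y,\,\agteq\Y{\Y'},\,\agteq{\Y'}{\End}}$, which is not guarded in the paper's sense (the p-set system survives, because \rn{\infn{\sw}} emits $\agteq\x{\y\cup\participants}$, whose right-hand side is a union, not a bare variable). So the lemma, read literally, fails on such derivations; it becomes true --- and your induction goes through --- only if one either forbids consecutive applications of \rn{\infn{\sw}} (harmless, since two consecutive weakenings can always be merged into one), or weakens guardedness to forbid only cyclic alias chains, which is all that is needed for solutions to remain well defined. Your instinct to flag this as the crux and to demand an explicit invariant was exactly right; the flaw is that the needed invariant is a missing side condition of the inference system itself, not something your induction --- or the paper's one-line proof --- can supply.
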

\begin{proof}
By a straightforward induction on the derivation of $\tyalg{\Goals}{\pair{\Nt,\x}\X}{\eqsys}{\eqsysP,\Pcon}$. \qed
\end{proof} 

\begin{lemma}\label{lem:eq-play}
If $\tyalg{\Goals}{\pair{\Nt,\x}\X}{\eqsys}{\eqsysP,\Pcon}$ and $\theta \in \gsol[\Goals]{\eqsys,\eqsysP,\Pcon}$, then $\plays{\theta(\X)}\cup\theta(\x)=\plays{\Nt}$.
\end{lemma}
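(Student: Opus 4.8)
The plan is to prove Lemma~\ref{lem:eq-play} by induction on the derivation of the inference judgment $\tyalg{\Goals}{\pair{\Nt,\x}\X}{\eqsys}{\eqsysP,\Pcon}$, proceeding by cases on the last rule applied. This mirrors the structure of Lemma~\ref{lem:triv}, which is the analogous statement for the type system itself, so I expect the argument to follow the same shape. The key subtlety is that $\theta$ is a solution of the \emph{entire} system $(\eqsys,\eqsysP,\Pcon)$ produced at the root, not of the smaller subsystems occurring in the premises; I will need to invoke the monotonicity property noted just after the definition of $\gsol{\eqsys,\eqsysP,\Pcon}$, namely that $\eqsys_i\subseteq\eqsys$, $\eqsysP_i\subseteq\eqsysP$, $\Pcon_i\subseteq\Pcon$ imply $\gsol{\eqsys,\eqsysP,\Pcon}\subseteq\gsol{\eqsys_i,\eqsysP_i,\Pcon_i}$, so that the same $\theta$ (restricted as needed) is a solution of each subsystem and the induction hypothesis applies.

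First I would handle the two base cases. For Axiom \rn{\infn{End}} we have $\Nt=\pP\pp\inact$, $\eqsys=\set{\agteq\X\End}$ and $\eqsysP=\set{\agteq\x\emptyset}$, so any solution gives $\theta(\X)=\End$ and $\theta(\x)=\emptyset$; hence $\plays{\theta(\X)}\cup\theta(\x)=\emptyset=\plays{\pP\pp\inact}$. For Axiom \rn{\infn{Cycle}} the equations are $\agteq\X\Y$ and $\agteq\x\y$, where $\pair{\Nt,\y}\Y\in\Goals$; thus $\theta(\X)=\theta(\Y)$ and $\theta(\x)=\theta(\y)$, and since $\theta$ agrees with $\Goals$ we get $\plays{\theta(\Y)}\cup\theta(\y)=\plays{\Nt}$, which transfers directly to $\X,\x$. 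Here the definition of ``agrees with $\Goals$'' does the real work, so I must be careful that the goal recorded in $\Goals$ carries exactly the session $\Nt$ appearing in the current goal.

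For the inductive step, in Rule \rn{\infn{Comm}} I have $\eqsys$ containing $\agteq\X{\Gc\pp\q\lambda\Y i I}$ and $\eqsysP$ containing $\agteq\x{\bigcup_{i\in I}\y_i}$. I would compute $\plays{\theta(\X)}=\set{\pp,\q}\cup\bigcup_{i\in I}\plays{\theta(\Y_i)}$ from the path-based definition of $\plays{\cdot}$ on the outermost communication, and $\theta(\x)=\bigcup_{i\in I}\theta(\y_i)$. Applying the induction hypothesis to each premise (via the monotonicity remark and the fact that $\theta$ still agrees with the larger goal set $\Goals'$) yields $\plays{\theta(\Y_i)}\cup\theta(\y_i)=\plays{\pP\pp{\PP_i}\parN\pP\q{\Q_i}\parN\Nt}$, whose participant set is $\set{\pp,\q}\cup\plays\Nt$ for every $i$ (since $\PP,\Q$ have $\pp,\q$ as their nominal participants). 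Taking the union over $i\in I$ and adding $\set{\pp,\q}$ from the output/input prefixes gives exactly $\plays{\pP\pp\PP\parN\pP\q\Q\parN\Nt}$, as required. The case for Rule \rn{\infn{\sw}} is analogous but easier: from $\agteq\X\Y$ and $\agteq\x{\y\cup\participants}$ with $\participants=\plays{\Nt_2}$, the induction hypothesis on the single premise gives $\plays{\theta(\Y)}\cup\theta(\y)=\plays{\Nt_1}$, so $\plays{\theta(\X)}\cup\theta(\x)=\plays{\Nt_1}\cup\plays{\Nt_2}=\plays{\Nt_1\parN\Nt_2}$.

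The main obstacle I anticipate is bookkeeping rather than conceptual: I must verify that $\theta$, a solution agreeing with the outer goal set, genuinely qualifies as a solution agreeing with each premise's (larger) goal set in the sense of $\gsol[\Goals']{\cdots}$, so that the induction hypothesis is legitimately applicable. This rests on two observations that I would state explicitly: the subsystems in the premises are subsets of the full system (so $\theta\in\gsol{\eqsys_i,\eqsysP_i,\Pcon_i}$ by monotonicity), and the agreement condition is preserved because $\Goals\subseteq\Goals'$ and the extra goal added to $\Goals'$ is precisely the current goal $\pair{\pP\pp\PP\parN\pP\q\Q\parN\Nt,\x}\X$, for which the desired equality $\plays{\theta(\X)}\cup\theta(\x)=\plays{\Nt}$ is exactly what we are in the process of establishing. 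I would remark that there is no circularity: the agreement with this freshly added goal is used only inside a \rn{\infn{Cycle}} application deeper in the derivation, where it has already been secured by the structure of the recursion, so the induction on derivation height remains well-founded.
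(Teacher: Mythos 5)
Your induction is circular in the \rn{\infn{Comm}} case, and the way you dismiss the circularity is exactly where the proof breaks. To apply the induction hypothesis to the $i$-th premise you must exhibit $\theta$ as an element of $\gsol[\Goals']{\eqsys_i,\eqsysP_i,\Pcon_i}$ with $\Goals'=\Goals,\pair{\Nt,\x}\X$, i.e.\ you must already know that $\theta$ agrees with the goal just added, which is literally the equality $\plays{\theta(\X)}\cup\theta(\x)=\plays{\Nt}$ you are trying to prove. Your closing remark that this agreement ``is used only inside a \rn{\infn{Cycle}} application deeper in the derivation, where it has already been secured by the structure of the recursion'' is not true: \rn{\infn{Cycle}} leaves \emph{do} refer to goals added by \rn{\infn{Comm}} --- that is precisely how recursive global types are inferred (in Figure~\ref{ifsm} the equation $\agteq{\Y_5}\X$ arises from such a leaf) --- and at such a leaf the agreement with that goal is a hypothesis that nothing has yet established; in your scheme it could only be established by the conclusion of the enclosing \rn{\infn{Comm}}, which in turn rests on that very leaf. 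The dependency conclusion $\rightarrow$ premises $\rightarrow$ conclusion is a genuine cycle, so the induction is not well-founded.

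The gap cannot be patched within your approach, because your argument never invokes $\theta\varpropto\Pcon$, and without that hypothesis the statement is false. Take $\Nt_0\equiv\pP\pp\PP\parN\pP\q\Q$ with $\PP=\q!\la.\PP$ and $\Q=\pp?\la.\Q$: one \rn{\infn{Comm}} step closed by \rn{\infn{Cycle}} returns $\eqsys=\set{\agteq\X{\pp\to\q{:}\la.\Y},\,\agteq\Y\X}$, $\eqsysP=\set{\agteq\x\y,\,\agteq\y\x}$ and $\Pcon=\set{\pcon{\plays\Y}\y\pp\q\emptyset}$. Every $\theta$ with $\theta(\X)=\theta(\Y)=\G_0$, where $\G_0=\pp\to\q{:}\la.\G_0$, and $\theta(\x)=\theta(\y)=A$ for an arbitrary set $A$ solves both equation systems and agrees with the (empty) goal set; choosing $A=\set\pr$ gives $\plays{\theta(\X)}\cup\theta(\x)=\set{\pp,\q,\pr}\neq\plays{\Nt_0}$, and it also refutes, for this very derivation, your claim that agreement with the \rn{\infn{Comm}}-added goal is automatic. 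Only $\theta\varpropto\Pcon$ excludes such solutions. Accordingly, the paper's proof does the opposite of yours in this case: it makes no use of the induction hypothesis, but reads the equality off the p-conditions that Rule \rn{\infn{Comm}} installs for exactly this purpose: $\theta\varpropto\Pcon$ gives $(\plays{\theta(\Y_i)}\cup\theta(\y_i))\setminus\set{\pp,\q}=\plays{\Nt'}$ for all $i\in I$ (with $\Nt'$ the residual session), and combined with $\agteq\X{\Gc\pp\q\la\Y i I}\in\eqsys$ and $\agteq\x{\bigcup_{i\in I}\y_i}\in\eqsysP$ this yields $\plays{\theta(\X)}\cup\theta(\x)=\set{\pp,\q}\cup\plays{\Nt'}=\plays{\pP\pp\PP\parN\pP\q\Q\parN\Nt'}$ by a pure set computation. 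The induction hypothesis is genuinely needed only in the \rn{\infn{\sw}} case, and there the added goal is harmless: its session has strictly more participants than any session occurring deeper in that branch, so no \rn{\infn{Cycle}} can ever refer to it (this is the paper's remark on the use of Rule \rn{\infn{\sw}}).
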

\begin{proof}
By induction on the derivation of $\tyalg{\Goals}{\pair{\Nt,\x}\X}{\eqsys}{\eqsysP,\Pcon}$. The only interesting case is when Rule \rn{\infn{Comm}} is applied. 
From $\theta \varpropto\Pcon$ we get $(\plays{\theta
(\Y_i)}\cup\theta(\y_i))\setminus\set{\pp,\q}={\plays{\Nt}}$ for all $i \in I$, which imply $\plays{\theta
(\X)}\cup\theta(\x)={\plays{\pP{\pp}{\PP}\parN \pP{\q}{\Q} \parN \Nt}}$ since $\agteq\X{\Gc\pp\q\lambda\Y i I}\in  \eqsys$ and 
$\agteq\x{\bigcup_{i \in I}\y_i}\in\eqsysP$.  \qed
\end{proof}

\begin{figure}
 \begin{math}
 \begin{array}{c}
 \NamedRule{\rn{\itrp{End}}}{}{ \tynIP\Nset{\pP\pp\inact:\End}\emptyset }{} \qquad 
 \NamedRule{\rn{\itrp{Cycle}}}{}{ \tynIP{\Nset,\ipair\G\Nt\participants}{\Nt:\G} \participants}{} \\[1.5ex]
 \NamedRule{\rn{\itrp{Comm}}}{\mbox{$\begin{array}{c}
   \tynIP{\Nset,\ipair\G\Nt\participants}{\pP\pp{\PP_i}\parN\pP\q{\Q_i}\parN\Nt':\G_i}{\participants_i} \\ (\plays{\G_i}\cup\participants_i)\setminus\set{\pp,\q}=\plays{\Nt'}
   \ \ \forall i \in I\end{array}$}
 }{ \tynIP{\Nset}{\Nt:\G} \participants}{\begin{array}{c}\Nt\equiv \pP\pp{\Pou\q\lambda\PP i I}\parN\pP\q{\Pin\pp\lambda\Q j J}\parN\Nt'\\
 \G=\Gc{\pp}{\q}{\lambda}{\G}i I \quad \G \mathrm{\ is\ bounded}\\ 
 \participants=\cup_{i\in I}\participants _i \quad I\subseteq J\end{array}}\\[1.5ex]
 \NamedRule{\rn{\itrp{\sw}}}{\tynIP{\Nset,\ipair\G{\Nt_1\parN\Nt_2
 }{\participants_1\cup\participants_2
 }}{\Nt_1:\G}{\participants_1}}{ \tynIP\Nset{\Nt_1\parN\Nt_2:\G}{\participants_1\cup\participants_2} }{\participants_2=\plays{\Nt_2}\neq\emptyset}
 \end{array}
 \end{math}
 \caption{Inductive typing rules for sessions.}\label{fig:itr}
 \end{figure}
\medskip
To show soundness   and completeness   of our inference algorithm, it is handy to formulate an inductive version of our typing rules, see Figure~\ref{fig:itr},
where $\Nset$ ranges over sets of triples $\ipair\G\Nt\participants$.
We can give an inductive formulation since all infinite derivations using the typing rules of  Definition~\ref{def:type-system}  are regular, i.e. the number of different subtrees of a derivation for a judgement $\tyng\G\Nt\participants$ is finite.  In fact,   it is bounded by the product of the number of different subterms of $\G$ and the number of different subsessions 
of $\Nt$, which are both finite as $\G$ and (processes in) $\Nt$ are regular.
 Applying  the standard transformation according to~\cite[Section  21.9]{pier02} from a coinductive to an inductive formulation we get the typing rules shown in Figure~\ref{fig:itr}.  
 
  In the following two lemmas we relate inference and inductive  derivability.  

 \begin{lemma}\label{lem:inf-sound}
If $\tyalg{\Goals}{\pair{\Nt,\x}\X}{\eqsys}{\eqsysP,\Pcon}$,  then    $\tynI{\Goals\theta}{\Nt : \theta(\X)}{ \theta(x) }$ 
for all $\theta \in \gsol[\Goals]{\eqsys,\eqsysP,\Pcon}$ such that $\vars\Goals\subseteq\dom\theta$.  \end{lemma}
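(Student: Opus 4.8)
The plan is to argue by induction on the derivation of $\tyalg{\Goals}{\pair{\Nt,\x}\X}{\eqsys}{\eqsysP,\Pcon}$, with a case analysis on the last inference rule applied, rebuilding in each case the corresponding inductive rule of Figure~\ref{fig:itr}. The two axioms are immediate. For \rn{\infn{End}} any solution forces $\theta(\X)=\End$ and $\theta(\x)=\emptyset$, so Axiom \rn{\itrp{End}} applies. For \rn{\infn{Cycle}} the equations $\agteq\X\Y$ and $\agteq\x\y$ give $\theta(\X)=\theta(\Y)$ and $\theta(\x)=\theta(\y)$; hence the triple $\ipair{\theta(\X)}{\Nt}{\theta(\x)}$ equals the image $\ipair{\theta(\Y)}{\Nt}{\theta(\y)}$ of the memoised goal $\pair{\Nt,\y}\Y$, which lies in $(\Goals,\pair{\Nt,\y}\Y)\theta$, and Axiom \rn{\itrp{Cycle}} concludes.

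For Rule \rn{\infn{Comm}} I would apply the inductive rule \rn{\itrp{Comm}}. Since $\theta$ solves the equations and $\agteq\X{\Gc\pp\q\lambda\Y i I}\in\eqsys$ and $\agteq\x{\bigcup_{i\in I}\y_i}\in\eqsysP$, we read off $\theta(\X)=\pp\to\q{:}\set{\lambda_i.\theta(\Y_i)}_{i\in I}$ and $\theta(\x)=\bigcup_{i\in I}\theta(\y_i)$, which fixes the branch types $\theta(\Y_i)$ and ignored sets $\theta(\y_i)$. To feed each premise $\tyalg{\Goals'}{\pair{\pP\pp{\PP_i}\parN\pP\q{\Q_i}\parN\Nt,\y_i}{\Y_i}}{\eqsys_i}{\eqsysP_i,\Pcon_i}$ to the induction hypothesis, I first note that $\eqsys_i\subseteq\eqsys$, $\eqsysP_i\subseteq\eqsysP$ and $\Pcon_i\subseteq\Pcon$, so by the monotonicity of $\mathsf{sol}$ recorded after its definition we have $\theta\in\gsol{\eqsys_i,\eqsysP_i,\Pcon_i}$. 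It then remains to check that $\theta$ agrees with the enlarged goal set $\Goals'=\Goals,\pair{\pP\pp\PP\parN\pP\q\Q\parN\Nt,\x}{\X}$ and that $\vars{\Goals'}\subseteq\dom\theta$: agreement with the goals of $\Goals$ holds by hypothesis, agreement with the freshly added goal is exactly $\plays{\theta(\X)}\cup\theta(\x)=\plays{\pP\pp\PP\parN\pP\q\Q\parN\Nt}$, which Lemma~\ref{lem:eq-play} supplies for the conclusion judgment, and the domain inclusion holds since $\X$ and $\x$ lie in $\dom\eqsys$ and $\dom\eqsysP$ respectively, both included in $\dom\theta$.

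The induction hypothesis then yields $\tynI{\Goals'\theta}{\pP\pp{\PP_i}\parN\pP\q{\Q_i}\parN\Nt:\theta(\Y_i)}{\theta(\y_i)}$ for every $i\in I$, and $\Goals'\theta$ is precisely $\Goals\theta$ extended with $\ipair{\theta(\X)}{\pP\pp\PP\parN\pP\q\Q\parN\Nt}{\theta(\x)}$, matching the premise shape of \rn{\itrp{Comm}}. The branch side conditions $(\plays{\theta(\Y_i)}\cup\theta(\y_i))\setminus\set{\pp,\q}=\plays\Nt$ are delivered verbatim by $\theta\varpropto\Pcon$, since $\pcon{\plays{\Y_i}}{\y_i}\pp\q{\plays\Nt}\in\Pcon$; boundedness of $\theta(\X)$ is part of being a solution; and $\theta(\x)=\bigcup_{i\in I}\theta(\y_i)$ matches the required $\participants=\bigcup_{i\in I}\participants_i$. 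Applying \rn{\itrp{Comm}} then gives the conclusion. Rule \rn{\infn{\sw}} is handled analogously but more lightly: $\agteq\X\Y$ and $\agteq\x{\y\cup\participants}$ give $\theta(\X)=\theta(\Y)$ and $\theta(\x)=\theta(\y)\cup\plays{\Nt_2}$; the induction hypothesis on the single premise, justified again by monotonicity, by Lemma~\ref{lem:eq-play} for agreement with the added goal, and by the domain inclusion, produces $\tynI{\Goals\theta,\ipair{\theta(\X)}{\Nt_1\parN\Nt_2}{\theta(\x)}}{\Nt_1:\theta(\Y)}{\theta(\y)}$, which is precisely the premise needed by Rule \rn{\itrp{\sw}} (with $\theta(\Y)=\theta(\X)$, $\participants_1=\theta(\y)$ and $\participants_2=\plays{\Nt_2}$); its side condition $\plays{\Nt_2}\neq\emptyset$ being inherited, we conclude.

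The only genuinely delicate point is the \rn{\infn{Comm}} case, where the induction hypothesis cannot be invoked until $\theta$ is known to agree with the enlarged goal set; this is exactly why Lemma~\ref{lem:eq-play} is established beforehand, certifying that each freshly memoised goal meets the agreement invariant, and why the p-conditions are threaded through the recursion, as they coincide with the branch side conditions of \rn{\itrp{Comm}}. All remaining steps are routine matching of the algorithm's output against the inductive rules, together with the monotonicity of $\mathsf{sol}$ under inclusion of the equation and condition systems.
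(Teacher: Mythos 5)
Your proof is correct and follows essentially the same route as the paper's: induction on the inference derivation with a case analysis on the last rule, using monotonicity of $\gsol{\cdot}$ under inclusion of equation/condition systems and Lemma~\ref{lem:eq-play} to establish agreement with the enlarged goal set before invoking the induction hypothesis, then closing each case with the matching rule of Figure~\ref{fig:itr}. Your treatment is in fact slightly more explicit than the paper's, spelling out the domain inclusion $\vars{\Goals'}\subseteq\dom\theta$ and the fact that the p-conditions and boundedness requirement deliver exactly the side conditions of Rule \rn{\itrp{Comm}}, details the paper leaves implicit.
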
 
\begin{proof}
By induction on the derivation of $\tyalg{\Goals}{\pair{\Nt,\x}\X}{\eqsys}{\eqsysP,\Pcon}$. 
 
{\em Axiom} \rn{\infn{End}} 
We have $\eqsys = \set{\agteq\X\End}$, $\eqsysP = \set{\agteq\x\emptyset}$ and $\Pcon = \emptyset$, hence $\theta(\X) = \End$,  $\theta(\x) =  \emptyset$ and the thesis follows by  Axiom $\rn{\itr{End}}$. 

{\em Axiom} \rn{\infn{Cycle}} 
We have $\eqsys = \set{\agteq\X\Y}$, $\eqsysP = \set{\agteq\x\y}$, $\Pcon = \emptyset$, and $\Goals = \Goals',\pair{\Nt,\y}\Y$. 
Then, $\theta(\X) = \theta(\Y)$, $\theta(\x) = \theta(\y)$ and   the thesis follows by  Axiom  \rn{\itr{Cycle}}.  

{\em Rule} \rn{\infn{Comm}}
We have $\Nt \equiv \pP\pp{\Pou\q\lambda\PP i I}\parN\pP\q{\Pin\pp\lambda\Q j J} \parN \Nt'$ with $I\subseteq J$  and   $\tyalg{\Goals,\pair{\Nt,\x}\X}{\pair{\Nt_i,\y_i}{\Y_i}}{\eqsys_i}{\eqsysP_i,\Pcon_i}$ with $\Y_i,\y_i$ fresh and $\Nt_i \equiv \pP\pp{\PP_i} \parN\pP\q{\Q_i} \parN\Nt'$   and 
$\eqsys = \{\agteq\X{\Gc\pp\q\lambda\Y i I}\}\cup\bigcup_{i \in I} \eqsys_i $, 
$\eqsysP = \{\agteq\x{\bigcup_{i \in I}\y_i}\}\cup\bigcup_{i \in I} \eqsysP_i $, 
$\Pcon=\set{\pcon{\plays{\Y_i}}{\y_i}\pp\q{\plays{\Nt'}}\mid\forall i \in I}\cup\bigcup_{i \in I} \Pcon_i$. 
Since $\eqsys_i\subseteq\eqsys$, $\eqsysP_i\subseteq\eqsysP$, and $\Pcon_i\subseteq\Pcon$, we have $\theta \in \gsol{\eqsys_i,\eqsysP_i,\Pcon_i}$ for all $i\in I$.  Being $\theta \in \gsol[\Goals]{\eqsys,\eqsysP,\Pcon}$, Lemma~\ref{lem:eq-play} implies $\plays{\theta(\X)}\cup\theta(\x) = \plays{\Nt}$. So   we get that $\theta$ agrees with $\Goals,\pair{\Nt,\x}\X$.
Then, by the induction hypothesis, we   have
$\tynI{\Goals\theta, \ipair {\theta(\X)}\Nt{\theta(\x)}}{\Nt_i : \theta(\Y_i)}{\theta(\y_i)}$ for all $i \in I$. The thesis follows by Rule \rn{\itr{Comm}}, since
$\theta(\X) = \pp\to\q:\set{\lambda_i.\theta(\Y_i)}_{i\in I}$ and $\theta( \x) = \bigcup_{i \in I}\theta( \y_i )$.

{\em Rule} \rn{\infn{\sw}} We have 
$\Nt\equiv\Nt_1\parN\Nt_2$ and $\participants=\plays{\Nt_2}\neq\emptyset$ and  
$\tyalg{\Goals,\pair{\Nt_1\parN\Nt_2,\x}{\X}}{\pair{\Nt_1,\y}{\Y}}{\eqsys_1}{\eqsysP_1,\Pcon}$ and $\eqsys = \set{\agteq\X{\Y}}\cup \eqsys_1$  and $\eqsysP = \set{\agteq\x{\y\cup\participants}}\cup \eqsysP_1$. 
Being $\theta \in \gsol[\Goals]{\eqsys,\eqsysP,\Pcon}$, Lemma~\ref{lem:eq-play} implies $\plays{\theta(\X)}\cup\theta(\x) = \plays{\Nt}$. So   we get that $\theta$ agrees with $\Goals,\pair{\Nt,\x}\X$.
Then, by the induction hypothesis, we  have  $\tynI{\Goals\theta, \ipair {\theta(\X)}\Nt{\theta(\x)}}{\Nt_1 : \theta(\Y)}{\theta(\y)}$. 
The thesis follows by Rule \rn{\itr{\sw}}. \qed
\end{proof}

\begin{lemma}\label{lem:inf-complete}
If $\tynI{\Nset}{\Nt:\G}\participants$ and   $\plays{\G'}\cup\participants'=\plays{\Nt'}$ for all $\pair{\Nt',\participants'}{\G'}\in\Nset$,   then, for all $\Goals$, $\X$, $\x$ and $\sigma$  such that $\X,\x\notin\vars\Goals$, $\dom\sigma = \vars\Goals$ and $\Goals\sigma = \Nset$,  there are $\eqsys$, $\eqsysP$, $\Pcon$ and $\theta$ such that
$\tyalg{\Goals}{\pair{\Nt,\x}\X}{\eqsys}{\eqsysP,\Pcon}$ and $\theta\in\gsol[\Goals]{\eqsys,\eqsysP,\Pcon}$ and
$\dom\theta = \vars\eqsys\cup \vars\eqsysP\cup\vars\Goals$ and  
$\sigma\sbtord\theta$ and $\theta(\X) = \G$  and $\theta(\x) = \participants$. 
\end{lemma}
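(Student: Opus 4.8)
The plan is to argue by induction on the derivation of $\tynI\Nset{\Nt:\G}\participants$ in the inductive system of Figure~\ref{fig:itr}, analysing the last rule applied; each inductive rule is mirrored by the homonymous inference rule of Figure~\ref{fig:tyalg}, so the required inference derivation is built by applying the matching inference rule at each step. Throughout, given $\Goals,\X,\x,\sigma$ as in the statement, I extend $\sigma$ to the head variables, obtaining the substitution $\sigma''$ that agrees with $\sigma$ and sends $\X$ to $\G$ and $\x$ to $\participants$; since $\X,\x\notin\vars\Goals=\dom\sigma$ this is well defined, with $\sigma\sbtord\sigma''$ and $\Goals''\sigma''=\Nset\cup\set{\ipair\G\Nt\participants}$, where $\Goals''=\Goals,\pair{\Nt,\x}\X$ is the goal set the inference rules push onto the premises.

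For the base cases the solution is immediate. In case \rn{\itr{End}} we have $\Nt=\pP\pp\inact$, $\G=\End$, $\participants=\emptyset$, and Axiom \rn{\infn{End}} returns $\eqsys=\set{\agteq\X\End}$, $\eqsysP=\set{\agteq\x\emptyset}$, $\Pcon=\emptyset$; here $\theta=\sigma''$ works, with agreement with $\Goals$ following from $\Goals\sigma=\Nset$ and the environment hypothesis $\plays{\G'}\cup\participants'=\plays{\Nt'}$. In case \rn{\itr{Cycle}} the triple $\ipair\G\Nt\participants$ lies in $\Nset=\Goals\sigma$, so some goal $\pair{\Nt,\y}\Y\in\Goals$ satisfies $\sigma(\Y)=\G$ and $\sigma(\y)=\participants$, Axiom \rn{\infn{Cycle}} applies returning $\eqsys=\set{\agteq\X\Y}$ and $\eqsysP=\set{\agteq\x\y}$, and again $\theta=\sigma''$ solves these since $\theta(\X)=\G=\theta(\Y)$ and $\theta(\x)=\participants=\theta(\y)$.

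For the step \rn{\itr{Comm}}, with $\Nt\equiv\pP\pp{\Pou\q\lambda\PP i I}\parN\pP\q{\Pin\pp\lambda\Q j J}\parN\Nt'$ and premises $\tynI{\Nset,\ipair\G\Nt\participants}{\pP\pp{\PP_i}\parN\pP\q{\Q_i}\parN\Nt':\G_i}{\participants_i}$, I apply \rn{\infn{Comm}} with fresh $\Y_i,\y_i$ and invoke the induction hypothesis on each premise, using goal set $\Goals''$, variables $\Y_i,\y_i$, and substitution $\sigma''$. To discharge the hypotheses of the induction I must know that the environment invariant holds for the new triple, i.e. $\plays\G\cup\participants=\plays\Nt$; this is the inductive analogue of Lemma~\ref{lem:triv}, obtained either by a one-line induction on inductive derivations or directly from $\participants=\bigcup_{i\in I}\participants_i$ and the side condition $(\plays{\G_i}\cup\participants_i)\setminus\set{\pp,\q}=\plays{\Nt'}$. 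The hypothesis yields, for each $i$, systems $\eqsys_i,\eqsysP_i,\Pcon_i$ and a solution $\theta_i$ with $\sigma''\sbtord\theta_i$, $\theta_i(\Y_i)=\G_i$, $\theta_i(\y_i)=\participants_i$, which I merge into $\theta=\bigcup_{i\in I}\theta_i$. This merge is well defined: by Lemma~\ref{lem:vars-eq} each $\eqsys_i,\eqsysP_i$ is $\Goals''$-closed, so the domain variables of distinct branches are the disjoint fresh ones, while on $\vars{\Goals''}$ every $\theta_i$ restricts to $\sigma''$. Finally $\theta\in\gsol[\Goals]{\eqsys,\eqsysP,\Pcon}$: the top equation $\agteq\X{\Gc\pp\q\lambda\Y i I}$ is solved since $\theta(\Y_i)=\G_i$ gives $\theta(\X)=\G$ (bounded, as \rn{Comm} requires), the equation $\agteq\x{\bigcup_{i\in I}\y_i}$ gives $\theta(\x)=\bigcup_{i\in I}\participants_i=\participants$, and each new p-condition $\pcon{\plays{\Y_i}}{\y_i}\pp\q{\plays{\Nt'}}$ is exactly the transcription of the side condition above; agreement with $\Goals$ and the domain equality follow by collecting the corresponding facts across branches. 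The step \rn{\itr{\sw}} is analogous but simpler, with a single recursive call on $\Nt_1$ and $\theta=\theta_1$; the only fresh checks are that $\agteq\X\Y$ and $\agteq\x{\y\cup\participants_2}$ are solved, which hold because $\theta(\Y)=\G$ and $\theta(\y)=\participants_1$, so $\theta(\y)\cup\participants_2=\participants_1\cup\participants_2=\participants$.

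I expect the main obstacle to be the variable bookkeeping together with the maintenance of the environment invariant, rather than any conceptual difficulty. Concretely, one must ensure that the fresh variables of distinct branches of \rn{\infn{Comm}} are disjoint so that $\bigcup_{i\in I}\theta_i$ is a genuine function (this is precisely where $\Goals''$-closedness from Lemma~\ref{lem:vars-eq} is used), and that the extended substitution $\sigma''$ indeed realises $\Goals''\sigma''=\Nset\cup\set{\ipair\G\Nt\participants}$, which forces establishing $\plays\G\cup\participants=\plays\Nt$ \emph{before} the induction hypothesis can be invoked. Everything else is a direct rendering of the side conditions of the typing rules into equations and p-conditions.
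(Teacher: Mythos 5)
Your proposal is correct and takes essentially the same route as the paper's own proof: induction on the inductive derivation with the same case analysis, the same extension of $\sigma$ by $\X\mapsto\G$, $\x\mapsto\participants$, the same invocation of the induction hypothesis with the enlarged goal set, and the same merge of branch solutions into a single $\theta$, including the observation that the invariant $\plays\G\cup\participants=\plays\Nt$ must be secured before the induction hypothesis can be applied. The only cosmetic difference is that you justify disjointness of the merged domains via $\Goals$-closedness from Lemma~\ref{lem:vars-eq}, whereas the paper appeals directly to the freshness of the variables introduced by the algorithm.
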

\begin{proof}
By induction on the derivation of $\tynI{\Nset}{\Nt:\G}\participants$. It is easy to verify that $\tynI{\Nset}{\Nt:\G}\participants$ implies $\plays\G\cup\participants=\plays\Nt$.
%

{\em Axiom} \rn{\itr{End}} 
The thesis is immediate by  Axiom  \rn{\infn{End}} taking $\theta = \sigma \sbtplus \set{\X\mapsto\End,\x\mapsto\emptyset}$. 

 {\em Axiom} \rn{\itr{Cycle}}  In this case 
we  have   $\Nset = \Nset',\ipair  \G\Nt\participants $,   then $\Goals = \Goals', \pair{\Nt,\y}\Y$ and $\sigma(\Y) = \G$ and $\sigma(\y) = \participants$. 
By  Axiom  \rn{\infn{Cycle}}, we get $\tyalg{\Goals}{\pair{\Nt,\x}\X}{\set{\agteq\X\Y}}{\set{\agteq\x\y},\emptyset}$, hence 
$\theta = \sigma \sbtplus \set{\X\mapsto \G, \x\mapsto \participants}$ is a solution of $\set{\agteq\X\Y}$ and of $\set{\agteq\x\y}$, which agrees with $\Goals$,   being $\plays\G\cup\participants=\plays\Nt$   as needed. 

{\em Rule} \rn{\itr{Comm}}
  In this case   we have $\Nt \equiv \pP\pp{\Pou\q\lambda\PP i I}\parN\pP\q{\Pin\pp\lambda\Q j J} \parN \Nt'$ with $I\subseteq J$ and $\G = \Gc\pp\q\lambda\G{i}{I}$ and
$\tynI{\Nset,\ipair  \G\Nt \participants}{\Nt_i:\G_i}{\participants_i}$ with $\Nt_i \equiv \pP\pp{\PP_i}\parN\pP\q{\Q_i}\parN\Nt'$   and \linebreak $(\plays{\G_i}\cup\participants_i)\setminus\set{\pp,\q}= \plays{\Nt'}$,   for all $i \in I$.  This last condition implies $\plays\G\cup\participants=\plays\Nt$.
Set $\sigma' = \sigma \sbtplus \set{\X\mapsto \G,\x\mapsto \participants}$  and $\Goals'=\Goals, \pair{\Nt,\x}\X$,  then, by the induction hypothesis, we get
that  there are $\eqsys_i, \eqsysP_i, \Pcon_i$ and  $\theta_i$ such that
$\tyalg{\Goals'}{\pair{\Nt_i,\y_i}{\Y_i}}{\eqsys_i}{\eqsysP_i, \Pcon_i}$ and $\theta_i \in \gsol[\Goals']{\eqsys_i,\eqsysP_i, \Pcon_i}$ and
 $\dom{\theta_i}= \vars{\eqsys_i}\cup \vars{\eqsysP_i}\cup\vars{\Goals'} $ and
$\sigma'\sbtord \theta_i$ and $\theta_i(\Y_i) = \G_i$ and $\theta_i(\y_i) = \participants_i$, for all $i \in I$.
We can assume that   $j\ne l$ implies $\Y_j\ne\Y_l$ and $\dom{\eqsys_j}\cap\dom{\eqsys_l} = \emptyset$ and $\y_j\ne\y_l$ and $\dom{\eqsysP_j}\cap\dom{\eqsysP_l} = \emptyset$ for all $j,l\in I$,   because the algorithm always introduces fresh variables.
  This   implies $\dom{\theta_j}\cap\dom{\theta_l} = \set{\X,\x}$ 
  for all $j \ne l$, and so $\theta = \sum_{i \in I} \theta_i$ is well defined.
Moreover, we have $\theta \in \gsol[\Goals']{\eqsys_i,\eqsysP_i, \Pcon_i}$ and
$\sigma \sbtord \theta$ and $\theta(\X) = \G$  and $\theta(\x) = \participants$, as $\sigma\sbtord \sigma'$ and $\sigma'\sbtord \theta_i \sbtord \theta$ for all $i \in I$.
  From $(\plays{\G_i}\cup\participants_i)\setminus\set{\pp,\q}= \plays{\Nt'}$ we get $(\plays{\theta(\Y_i)}\cup\theta(\y_i))\setminus\set{\pp,\q} = \plays{\Nt'}$
 for all $i \in I$.
By Rule \rn{\infn{Comm}} we get $\tyalg{\Goals}{\pair{\Nt,\x}\X}{\eqsys}{\eqsysP,\Pcon}$ with
$\eqsys = \set{\agteq\X{\Gc\pp\q\lambda\Y{i}{I}}}\cup\bigcup_{i \in I} \eqsys_i$ and 
$\eqsysP = \set{\agteq\x{\cup_{i\in I}\y_i}}\cup\bigcup_{i \in I} \eqsysP_i$ and 
$\Pcon=\set{\pcon{\plays{\Y_i}}{\y_i}\pp\q{\plays{\Nt}}\mid\forall i \in I}\cup\bigcup_{i \in I} \Pcon_i$  and
$\theta \in \gsol[\Goals]{\eqsys,\eqsysP,\Pcon}$, since
$\theta(\X) =\G= \Gc\pp\q\lambda\G{i}{I} = \pp\to\q:\set{\lambda_i.\theta_i(\Y_i)}_{i \in I} = (\Gc\pp\q\lambda\Y{i}{I})\theta$ 
and $\theta(\x) = \participants=\cup_{i\in I}\participants_i = \cup_{i\in I}\theta(\y_i)= (\cup_{i\in I}\y_i)\theta
$ and 
$\sigma\sbtord\theta$.

{\em Rule} \rn{\itr{\sw}} 
We have  ${\tynIP{\Nset,\ipair\G{\Nt_1\parN\Nt_2}{\participants_1\cup\participants_2}}{\Nt_1:\G}{\participants_1}}$
 and $\participants_2=\plays{\Nt_2}\neq\emptyset$ and $\plays\G\cup\participants=\plays\Nt$,
 where
$\Nt\equiv\Nt_1\parN\Nt_2$ and $\participants=\participants_1\cup\participants_2$.
Set $\sigma' = \sigma \sbtplus \set{\X\mapsto \G, \x\mapsto\participants_1\cup\participants_2}$  and $\Goals'=\Goals, \pair{\Nt,\x}\X$,  then, by the induction hypothesis, we get
that  there are $\eqsys_1$, $\eqsysP_1$, $\Pcon_1$ and  $\theta$  such that
$\tyalg{\Goals'}{\pair{\Nt_1,\y}{\Y}}{\eqsys_1}{\eqsysP_1,\Pcon_1}$ and $\theta \in \gsol[\Goals']{\eqsys_1,\eqsysP_1,\Pcon_1}$ and
 $\dom{\theta}= \vars{\eqsys_1}\cup \vars{\eqsysP_1}\cup \vars{\Goals'} $ and
$\sigma'\sbtord \theta$ and $\theta(\Y) = \G$  and $\theta(\y) = \participants_1$.
By Rule \rn{\infn{\sw}} we get $\tyalg{\Goals}{\pair{\Nt,\x}{\X}}{\eqsys}{\eqsysP,\Pcon_1}$ with $\eqsys=\set{\agteq\X{\Y}}\cup\eqsys_1$ and $\eqsysP=\set{\agteq\x{\y\cup\participants_2}}\cup\eqsysP_1$ and
$\theta \in \gsol[\Goals]{\eqsys,\eqsysP,\Pcon_1}$,  since
$\theta(\X) =\G= \theta(\Y) $
 and $\theta(\x) = \participants=\participants_1\cup\participants_2 =\theta(\y)\cup \participants_2= (\y\cup \participants_2)\theta$ and
$\sigma\sbtord\theta$.  \qed
\end{proof}

\begin{theorem}[Soundness and  Completeness   of Inference]\label{thm:sac}~\begin{enumerate}
\item\label{thm:sac1} If $\tyalg{}{\pair{\Nt,\x}\X}\eqsys{\eqsysP,\Pcon}$, then $\tyng{\theta(\X)}\Nt{\theta(\x)}$ for all $\theta \in\gsol{\eqsys,\eqsysP,\Pcon}$.
 \item\label{thm:sac2} If $\tyng\G\Nt\participants$, then there are $\eqsys$, $\eqsysP$, $\Pcon$ and $\theta$ such that $\tyalg{}{\pair{\Nt,\x}{\X}}{\eqsys}{\eqsysP,\Pcon}$ and $\theta \in\gsol{\eqsys,\eqsysP,\Pcon}$ and $\theta(\X)=\G$ and $\theta(\x)=\participants$. 
\end{enumerate}
\end{theorem}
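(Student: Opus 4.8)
The plan is to derive both statements from the two auxiliary lemmas by routing through the inductive typing system of Figure~\ref{fig:itr}. Those lemmas connect the inference judgment $\tyalg{\Goals}{\pair{\Nt,\x}\X}{\eqsys}{\eqsysP,\Pcon}$ to derivability in the $\iota$-system $\vdash^{\iota}_{\participants}$, so the single extra ingredient I need is the equivalence, at the empty context, between that inductive system and the coinductive system $\vdash_{\participants}$ of Definition~\ref{def:type-system}:
\Cline{\tyng\G\Nt\participants \quad\Longleftrightarrow\quad \tynIP{\emptyset}{\Nt:\G}{\participants}.}
This is precisely the coinductive-to-inductive transformation of~\cite[Section 21.9]{pier02}, justified here by regularity: a coinductive derivation of $\tyng\G\Nt\participants$ contains only finitely many distinct judgments (bounded by the product of the distinct subterms of $\G$ and the distinct subsessions of $\Nt$). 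I would record this equivalence as a preliminary fact and then invoke it in both directions.

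For soundness, item~\ref{thm:sac1}, I would instantiate Lemma~\ref{lem:inf-sound} at $\Goals=\emptyset$. Because there are no goals, the agreement condition is vacuous, so any $\theta\in\gsol{\eqsys,\eqsysP,\Pcon}$ already lies in $\gsol[\emptyset]{\eqsys,\eqsysP,\Pcon}$, and $\vars\emptyset=\emptyset\subseteq\dom\theta$ holds trivially. The lemma then yields $\tynIP{\emptyset}{\Nt:\theta(\X)}{\theta(\x)}$, using $\Goals\theta=\emptyset$. The equivalence above converts this to $\tyng{\theta(\X)}\Nt{\theta(\x)}$, which is the claim.

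For completeness, item~\ref{thm:sac2}, I would run the same bridge in reverse. From $\tyng\G\Nt\participants$ the equivalence gives $\tynIP{\emptyset}{\Nt:\G}{\participants}$. The side hypothesis of Lemma~\ref{lem:inf-complete}, requiring $\plays{\G'}\cup\participants'=\plays{\Nt'}$ for every triple in $\Nset$, holds vacuously since $\Nset=\emptyset$. I would then apply the lemma with $\Goals=\emptyset$, fresh $\X$ and $\x$, and $\sigma$ the empty substitution, so that $\dom\sigma=\vars\emptyset=\emptyset$ and $\Goals\sigma=\emptyset=\Nset$. This supplies $\eqsys$, $\eqsysP$, $\Pcon$ and $\theta$ with $\tyalg{}{\pair{\Nt,\x}\X}{\eqsys}{\eqsysP,\Pcon}$, $\theta\in\gsol[\emptyset]{\eqsys,\eqsysP,\Pcon}=\gsol{\eqsys,\eqsysP,\Pcon}$, $\theta(\X)=\G$ and $\theta(\x)=\participants$, exactly as the statement requires.

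The only genuinely delicate step is the equivalence between the coinductive and inductive systems. Its easy direction, from coinductive to inductive, folds each repeated judgment along a branch of the regular coinductive derivation into an application of Axiom \rn{\itr{Cycle}}, the finitely many distinct judgments populating the context $\Nset$. The converse reads \rn{\itr{Cycle}} as the reuse of a coinductive hypothesis and unfolds a finite $\iota$-derivation into the corresponding regular infinite derivation of $\vdash_{\participants}$; here the message exchange added by each \rn{Comm} step, together with the condition $\participants_2\neq\emptyset$ that bars infinite chains of \rn{\sw}, guarantees the unfolding is productive. Since the paper already exhibits the bound witnessing regularity and cites~\cite[Section 21.9]{pier02} for the transformation, I would keep this step at the level of invoking that established correspondence rather than reproving it, and expect the remaining reasoning to be the routine bookkeeping carried out above.
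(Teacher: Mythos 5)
Your proposal is correct and follows essentially the same route as the paper: both parts are obtained by instantiating Lemma~\ref{lem:inf-sound} and Lemma~\ref{lem:inf-complete} at the empty goal set, bridged by the equivalence between the coinductive system $\vdash_{\participants}$ and the inductive system of Figure~\ref{fig:itr}, which the paper likewise takes as given via the regularity argument and the transformation of~\cite[Section 21.9]{pier02}. Your write-up merely makes explicit the vacuity checks and the equivalence step that the paper's proof leaves implicit.
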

\begin{proof}
(\ref{thm:sac1}). By Lemma~\ref{lem:inf-sound} $\tyalg{}{\pair{\Nt,\x}\X}{\eqsys}{\eqsysP,\Pcon}$ implies $\tynI{}{\Nt:\theta(\X)}{\theta(\x)}$ for all $\theta \in\gsol{\eqsys,\eqsysP,\Pcon}$. This is enough, since $\tynI{}{\Nt:\theta(\X)}{\theta(\x)}$ gives $\tyng{\theta(\X)}\Nt{\theta(\x)}$.

(\ref{thm:sac2}). From $\tyng\G\Nt\participants$ we get $\tynI{}{\Nt:\G}\participants$. 
By Lemma~\ref{lem:inf-complete} this implies that there are $\eqsys$, $\eqsysP$, $\Pcon$ and $\theta$ such that 
$\tyalg{}{\pair{\Nt,\x}\X}{\eqsys}{\eqsysP,\Pcon}$ and $\theta \in\gsol{\eqsys,\eqsysP,\Pcon}$ and $\theta(\X)=\G$ and $\theta(\x)=\participants$.  \qed
\end{proof}

\begin{remark}[Termination]\em
As  it  happens for (co)SLD-resolution in logic programming, the termination of the inference algorithm depends on the choice of a resolution strategy. 
Indeed, we have many sources of non-determinism: 
we have to select 
 two participants  of the session with matching processes and expand them using Rule  \rn{\infn{Comm}}, or ignore part of the session using Rule  \rn{\infn{\sw}}
or try to close a cycle using the  Axiom  \rn{\infn{Cycle}}. 
A standard way to obtain a sound and complete resolution strategy is to build a tree where all such choices are performed in  parallel and then visit the tree using a breadth-first strategy. 
The tree is potentially infinite in depth, but it is finitely branching, since at each point we have only finitely many different choices, hence this strategy necessarily enumerates all solutions. 
\end{remark}

\begin{remark}[Use of Rule \rn{\infn{\sw}}]\em
 Note  that in a $\,\tynIP{}{}{\participants}$ derivation the triple in the premise of Rule  \rn{\itrp{\sw}} can never be used in an application of  Axiom  \rn{\itrp{Cycle}}. This 
-- as already hinted at in Example~\ref{ex:typingwe} --  
immediately implies that Rule \rn{\sw} is not strictly necessary inside infinite branches of $\,\tyng{}{}{\participants}$ derivations.
Moreover, a slight simplification of the algorithm can be got since, in the step corresponding to Rule
\rn{\infn{\sw}}, we could avoid adding the goal $\pair{\Nt_1\parN\Nt_2,\x}{\X}$ to the current
set of goals. This would reduce the number of goals to be checked during the step corresponding 
to  Axiom \rn{\itrp{Cycle}}. Rule  \rn{\infn{\sw}}  turns out to  
be necessary,  instead, when applying the algorithm to sessions where non-ignored participants 
expose a finite behaviour, like  $\pP\pp{\PP_2}\parN \pP\q{\PQ_2}\parN \pP\pu{\PU}$ in Example~\ref{ex:typingwe}.
 Also the typing of 
stuck sessions  with recursive processes 
 like 
$\pP\pp\PP\parN\pP\q\Q$ where $\PP=\q!\la.\PP$ and $\Q=\pp!\la.\Q$ requires the use of Rule \rn{\infn{\sw}}. 
%
\finex
\end{remark}

\section{Concluding Remarks, Related and Future Works}\label{rfw}

Lock-freedom is definitely a relevant and widely investigated 
communication property of concurrent and distributed systems. 
It  ensures absence of \emph{locks}, a lock being a reachable configuration 
where a communication action of a participant remains pending in any possible
continuation of the system. In case the participant prevented to progress be $\pp$, 
such configuration is called a  $\pp$-\emph{lock} (see \cite{BLT22} for an abstract definition of Lock-freedom).
Lock-freedom corresponds to the notion of liveness
in~\cite{KobayashiS10,LangeNTY17}  where the synchronous communication is channel-based. 
Sometimes properties different from what we intend are referred to by ``Lock-freedom'':
for instance the notion of Lock-freedom in~\cite{Kobayashi02}, under fair scheduling,
corresponds to what \cite{DS} and  \cite{BLT22} refer to as \emph{strong Lock-freedom}.

Various formalisms and methodologies  have  been developed in order to prove  \Lc-freedom 
while others do ensure \Lc-freedom by construction.
Among the former there are type assignment systems where typability entails  \Lc-freedom, 
both for asynchronous \cite{Padovani14} and synchronous \cite{BDL22} communications.

Lock-freedom is quite a strong property: it entails Deadlock-freedom, whereas the
vice versa does not hold. In several actual scenarios, lighter forms of  \Lc-freedom  would however suffice.
For instance in clients/servers scenarios where one can accept some servers to get locked after
their interactions with the clients have been completed.

In the present paper we developed a type assignment system where typability
ensures   \emph{$\participants$-excluded  \Lc-freedom}:  the absence of $\pp$-locks
for each participant $\pp$ not belonging to $\participants$.
This is achieved by means of ``partial'' typability, i.e. by disregarding typability 
of (sub)processes of participants that we can safely assume to get possibly locked.
Multiparty sessions (parallel compositions of named processes) are (partially) typed by
\emph{global types}, which in turn describe the overall interactions  inside the  multiparty sessions.
Our partial typability ensures also that the behaviours of the  non-ignored  participants 
adhere to what the global type describes. 
As far as we know, there are not other formalisms dealing with properties like 
$\participants$-excluded  \Lc-freedom.

Our partial typing is reminiscent of connecting communications, a notion introduced in~\cite{HY17} and  further investigated 
in~\cite{CDG18,CDGH19} in order to describe protocols with optional participants. The intuition behind connecting communications is that in some parts of the protocol, delimited by a choice construct, some participants
may be optional, namely they are ``invited'' to join the interaction only in some branches of the choice, by means of connecting communications. As argued in~\cite{HY17,CDG18,CDGH19}, this feature allows for a more natural description of typical communication protocols. In~\cite{CDGH19}, connecting communications also enable to express conditional delegation: this will be obtained by writing a choice where the delegation appears only in some branches of the choice, following a connecting communication. The participants offering connecting communications should be ignored in the present type system.  
An advantage of connecting communications over partial typing is that only participants offering connecting inputs can be stuck. The disadvantage is that the typing rules are more requiring, so many interesting sessions can be partially typed but cannot be typed by means of connecting communications. 

 In designing type inference we took inspiration from~\cite{DGD22}, where inputs and outputs are split in global types in order to better describe asynchronous communication.  Our inference algorithm is related as goal, but very different as methodology, to the algorithm in~\cite{LTY15}, which builds global graphs from sets of communicating finite state machines satisfying suitable conditions.  We are planning to implement our type inference algorithm. 

Unlike  many MPST formalisms in the literature, like ~\cite{Honda2016}, we  type sessions with global types without recurring to local types and projections.  It would be interesting to investigate the possibility
of extending the standard projection operator to a relation between global types
and possibly non lock-free local behaviours. Other simplifications of our calculus are the absence of values in messages and the unicity of channels. While we can easily enrich messages with values, allowing more than one channel requires sophisticated type systems in order to get \Lc-freedom~\cite{Padovani14}. 

The following example shows a further direction for investigation of partial typing, namely
to describe and analyse privacy matters.

\newtheorem{example*}{Example}
\noindent
{\bf Example}\,[{\bf Partial typing for privacy}] 
The communications written in global types can be viewed as public, while the others can be viewed as private. For example Alice and Bob want to 
discuss privately 
 which version of a game would be the most suitable for their son Carl,
who asked for it as birthday present.
Taking participants $\pa$, $\pb$ and $\pc$ to incarnate, respectively, Alice, Bob  and Carl this scenario can be represented by the session 

\Cline{\Nt\equiv\pP\pa{\pc? \msg{present}.\PP}\parN\pP\pb{\pc? \msg{present}.\Q}\parN\pP\pc{\pa! \msg{present}.\pb!\msg{present}}}

\noindent 
where $\PP=\pb!\set{\msg{bla}.\pb?\msg{bla'}.\PP,\msg{ok}}$ and $\Q=\pa?\set{\msg{bla}.\pa!\msg{bla'}.\PP,\msg{ok}}$.\\
 A suitable global type is $\G=\pc\to\pa{:}\msg{present}.\pc\to\pb{:}\msg{present}$. We can in fact derive $\tyng\G\Nt{{ \Set{\pa,\pb}}}$. 
\finex

\smallskip

 We also plan to investigate partial typing for asynchronous communications,  
possibly  modifying the type system  of~\cite{DGD22}. 
An advantage of that type system is the possibility of anticipating outputs over inputs without requiring the asynchronous subtyping of~\cite{MYH09}, which is  known to be undecidable~\cite{BCZ17,LY17}.  A difficulty will come from the larger  
freedom in choosing the order of interactions due to the splitting between writing and reading messages on a queue. 

\smallskip

{\bf Acknowledgments} We wish to gratefully thank the  anonymous reviewers for their thoughtful and helpful comments.

\bibliographystyle{eptcs}
\bibliography{session}







\end{document}